\documentclass[10pt,journal,a4paper]{IEEEtran}


\usepackage{cite}
\usepackage{xcolor}
\usepackage{graphicx}
\usepackage[latin1]{inputenc}
\usepackage[T1]{fontenc}
\usepackage{amsmath,amsfonts,amsbsy,amssymb}
\usepackage{mathabx}
\usepackage{url}

\usepackage{mathrsfs}
\usepackage[nolist]{acronym}
\usepackage{tabularx}
\usepackage{amssymb}
\usepackage{amsmath}
\usepackage{graphicx}
\usepackage[latin1]{inputenc}
\usepackage{graphicx}
\usepackage{cite}
\usepackage{multirow}
\usepackage{wasysym}
\usepackage{multirow}
\usepackage{float}
\usepackage{color}
\usepackage{enumitem}
\usepackage{amsthm}	
\usepackage{xcolor,soul}
\definecolor{aqua}{rgb}{0.0, 1.0, 1.0}
\definecolor{laranja}{rgb}{1.0, 0.6, 0.2}

\hyphenation{inter-ference-as-noise}

\newtheorem{proposition}{Proposition}

\newtheorem{corollary}{Corollary}

\newcommand{\expectation}[2]{\operatorname{E}_{#2}\left[ #1 \right]}

\begin{document}

\title{Finite Blocklength Communications in Smart Grids for Dynamic Spectrum Access and Locally Licensed Scenarios}
\author{
	\IEEEauthorblockN{Iran Ramezanipour, Parisa Nouri, Hirley Alves, Pedro J. H. Nardelli, Richard Demo Souza and Ari Pouttu} 

	\thanks{I. Ramezanipour, P. Nouri, H. Alves and A. Pouttu are with the Centre for Wireless Communications (CWC), University of Oulu, Finland. Contact: firstname.lastname@oulu.fi.}
	\thanks{P. Nardelli is with Lappeenranta University of Technology, Lappeenranta, Finland. Contact: pedro.nardelli@lut.fi.}
	\thanks{R.D. Souza is with Federal University of Santa Catarina (UFSC), Florianopolis, Brazil. Contact: richard.demo@ufsc.br.}
	\thanks{This work is partially supported by Aka Project SAFE (Grant n.303532), Strategic Research Council/Aka BCDC Energy (Grant n.$292854$), Finnish Funding Agency for Technology and Innovation (Tekes), Bittium Wireless, Keysight Technologies Finland, Kyynel, MediaTek Wireless, Nokia Solutions and Networks and CNPq, Brazil.}
}
%
\maketitle

\begin{abstract}

This work focuses on the performance analysis of short blocklength communication with application in smart grids. 
We use stochastic geometry to compute in closed form the success probability of a typical message transmission as a function of its size (i.e. blocklength), the number of information bits and the density of interferers. Two different scenarios are investigated: (\textit{i}) dynamic spectrum access where the licensed and unlicensed users, share the uplink channel frequency band and (\textit{ii}) local licensing approach using the so called micro operator, which holds an exclusive license of its own.
Approximated  outage probability expression is derived for the dynamic spectrum access scenario, while a closed-form solution is attained for the micro-operator.
The analysis also incorporates the use of retransmissions when messages are detected in error.
Our numerical results show how reliability and delay are related in either scenarios.
\end{abstract}

\begin{IEEEkeywords}
	Machine-to-machine, micro-operator, short message, dynamic spectrum access, local licensing, ultra-reliability, Poisson Point Process 
\end{IEEEkeywords}

\section{Introduction}


Wireless networks have become an indispensable part of our daily life through a wide range of applications.
For instance, we may now remotely monitor and control different processes within our homes, workplaces or even at industrial environments. 
In the upcoming years the advances in wireless communications shall be even more seamless and will provide connectivity through the so called Internet of Things (IoT) \cite{ART:Ericsson2015}.
It is envisioned that by the year 2020, billions of devices (including sensors and actuators) will be connected to the Internet, gathering all kinds of data and generating  a huge economic impact \cite{ART:McKinsey2015, ART:Perera-A2014}. 

One of the {key enablers of this future is the so-called machine-type communication} (MTC) where a large number of devices  will perform sensing, monitoring, actuation and control tasks with minimal or even no human intervention \cite{ART:Ericsson2015}.
In other words, MTC -- also known as machine-to-machine (M2M) communication -- incorporates sensors, appliances and vehicles and this is expected to lead to a decrease of human-centric connections~\cite{ali2015next}. 
MTC is also one of the cornerstones of the upcoming 5G communication technologies.
As discussed before, it contemplates a massive deployment of devices communicating with diverse range of requirements in terms of reliability, latency and data rates \cite{tullberg2016metis, bockelmann2016massive, hu, onel2} .
For example, \cite[Table. 2]{dawy2015towards} lists the main requirements and features for different use cases of MTC over cellular networks.

Intelligent transportation systems, industrial automation and smart grids have already been deployed using the IoT concept, which is also one of the main driving technologies of 5G ~\cite{schulz2017latency,8067687}.
It should be noted that the mentioned applications have different reliability requirements.
For instance, some smart grid applications requires a reliability as low as $10^{-6}$, which in their turn is less strict if compared to some industrial automation applications~\cite{schulz2017latency}.  

\vspace{-2ex}
\subsection{Dynamic Spectrum Access and Locally Licensed in 5G} 

One of the main features of 5G will be its capability of connecting very large number of devices in different locations, while being able to serve case specific needs of different applications. 
Indoor networks are responsible for the larger part of the mobile traffic, hence, it is essential to build new, more efficient indoor small cell networks.
This will require more spectrum, which makes its availability a big challenge to tackle. 
There are generally two ways to access the available spectrum in a network which are \cite{del2016d1}: (\textit{i}) Individual Authorization (Licensed), and (\textit{ii}) General Authorization (Unlicensed). 
There are five different allocations scenarios associated with the previously mentioned access schemes, namely dedicated licensed spectrum, limited spectrum pool, mutual renting, vertical sharing and unlicensed horizontal sharing.

Cognitive radio has gained a high popularity during the past few years since it makes a more efficient use of the frequency spectrum possible \cite{1}, \cite{Nardelli2015}. 
Dynamic spectrum access is one of the many interesting aspects of cognitive radios \cite{peha2009sharing} where the unlicensed-users can use the same frequency band as the licensed-users while not affecting their transmission. For that sake, the unlicensed-users evaluate the spectrum usage and then transmit if the channel is free, otherwise, they postpone their transmission or use other frequency bands \cite{saleem2014primary}, \cite{Nardelli2015}. Dynamic spectrum access can also be implemented in smart grid communication networks. In \cite{yu2011cognitive,gungor2012cognitive}, the use of dynamic spectrum access in smart grid communication networks is evaluated and its suitability is positively assessed.  

Another interesting approach towards reaching spectral efficiency is using the non-orthogonal cognitive radio techniques. Despite being relatively new, many valuable research works have been done in this field. In \cite{op}, authors develop a cognitive radio scheme for multicarrier wireless sensor networks by studying a dense wireless sensor network model where the sensors can opportunistically use the primary users' spectrum for their transmissions. Unlike the previous case, this model does not require the sensor nodes to sense the channel before transmission which is useful in terms of maintaining the limited resources of the sensors. Authors in \cite{qiu2012interference} propose a number of new interference control and power allocation methods for cognitive radios which sheds light on the primary and secondary users' power allocation requirements. An interesting work has been done in \cite{darsena2016convolutive} where a new spectrum sharing model is proposed for multicarrier cognitive radio systems in which the secondary users can simultaneously use the primary users' frequency band for their transmission while actually improving the primary users' transmission by the mean of convolutive superposition. 

Moreover, in \cite{han2008cooperative} and \cite{han2009cooperative}, two different spectrum access protocols are proposed for the secondary networks via controlled amplify-and-forward relaying and cooperative decode-and-forward relaying. These protocols are proven to not have a negative impact on the rate and outage probability of the primary network. Another interesting work while using amplify-and-forward scheme has also been done in \cite{verde2015amplify} where the proposed model makes it possible for secondary users to use the primary users' frequency channel for their transmissions even when the primary network is active. By using this model, it is possible to improve the secondary user's packet delay and primary users' achievable rate. A useful model for improving the secondary network's achievable rate is also introduced in \cite{shin2011time} where the authors achieve this goal by applying superposition coding to a collaborative spectrum sharing scenario.    

While all the aforementioned spectrum sharing and dynamic spectrum access models shall be a part of the future wireless communications, having exclusively licensed spectrum (locally licensed) is crucial for 5G to be able to meet some Quality of Service (QoS) requirements \cite{del2016d1}. 
For instance, the micro-operator ($\mu\mathrm{O}$) concept has been introduced as a mean for local service delivery in 5G which will benefit from having exclusively licensed spectrum. 
$\mu\mathrm{O}$s make the previously mentioned case-specific services in the future indoor small cell networks possible~\cite{boost},\cite{del2016d1}.
Micro-operators have their own specific infrastructure which enables them to handle different kinds of Mobile Network Operator (MNO) users while also collaborating with the network infrastructure vendors, facility users/owners, utility service companies and regulators.

$\mu\mathrm{O}$s shall help the MNOs specially in the areas that the traffic demand is high by offering them indoor capacity. 
It should be noted that the functionality of a $\mu\mathrm{O}$ depends on the available spectrum resources, which are limited.
As mentioned earlier, having exclusive licensed spectrum is very important for the success of 5G so the $\mu\mathrm{O}$s are the entities in a network that can benefit from it since regulators are able to issue local spectrum licenses for their own usage within a specific location \cite{del2016d1,boost}.
While we acknowledge all the valuable works mentioned earlier that have been done in the area of spectrum sharing which are also relevant for smart grid applications, it is important to mention that, in this study, our focus is to analyze the performance of the two dynamic spectrum access and locally licensed models.
To do so, next we review the reliability requirements for smart grid applications and recent works in the MTC area.

\vspace{-2ex}
\subsection{Reliability in Smart Grids and the Role of MTC} 

Communication systems have been traditionally studied using the notion of channel capacity that assumes very large (infinite) blocklength \cite{Polyanskiy2010a, chung2001design, shannon1961two}; this is a reasonable benchmark for practical systems with blocklength in the order of thousands of bits. 
MTC, however, often uses short messages which is not currently supported by the wireless networks and periodic data traffic, coming from a massive number of devices. The same assumptions in terms of channel capacity cannot be directly applied to short blocklength messages as pointed out in \cite{Polyanskiy2010a, ART:Durisi-PIEEE2016}. 
This imposes the need for a new paradigm on the network design and analysis architecture to support such amount of connected devices with their heterogeneous requirements \cite{ART:Perera-A2014}.
%
%

New information theoretic results have been presented to evaluate the performance of short blocklength systems, from point-to-point additive white Gaussian noise (AWGN) links up to Multiple-Input Multiple-Output (MIMO) fading channels \cite{ART:Durisi-PIEEE2016}.
In \cite{Makki2014a, Makki2015}, the authors investigate retransmission methods and interference networks in this context.
Nevertheless, network level analysis is still missing in the literature, except for \cite{Makki2016}, where the authors utilize Poisson Point Process (PPP) to characterize the network deployment and interference, considering finite block codes in a cellular network context. 

Smart grids characterize the modernization process that the power grids undergo and is one important case for MTC~\cite{ni,Nardelli2014}. Communication systems are one the most important parts of the smart grids and different communication technologies are currently being used in smart grids most of which use the existing communication technologies such as PLC, fiber optical communications and LTE \cite{piti2017role}. However, considering the rapid advancements of the communication systems towards 5G, smart grid communication systems should also be designed in a way that would be compatible with the newest telecommunication technology requirements which would not be really possible by using the traditional communication systems anymore. Therefore, smart grids are an interesting topic to be studied under the umbrella of 5G, especially considering that the requirements imposed by smart grids have hitherto overlooked specially with respect to massive connectivity and ultra reliable low latency communication \cite{8067687}. Hence, motivated by smart grids stringent requirements \cite{kuzlu2014communication,8067687}, here we focus on two different scenarios looking at the ultra-reliability using finite blocklength (short messages) in order to reduce latency and capture practical aspects regarding the message size, which is one of the novel aspects of 5G.
The reliability requirement of a smart grid network varies from one application to another \cite{kuzlu2014communication}. 
For instance, applications such as remote meter reading have less strict reliability requirements ($98\%$) while high-voltage grids require high reliability (more than $99.9\%$) in addition to low latency. Moreover, applications such as teleprotection in smart grid networks also require very high reliable data transmission between the power grid  substations within a very short period of time, in the order of few milliseconds. Smart girds also may need to have real time monitoring and control and should be able to react immediately to the changes in the network which means there is going to be a need for ultra reliable communications with $99.999\%-99.99999\%$ reliability level and a low latency, around $0.5-8 ms$ \cite{osseiran2014scenarios, fallgren2013scenarios, 8067687}. In this paper, we show that it is possible to achieve these different and strict requirements with our proposed models using finite blocklength communications which require a completely different design settings compared to what is currently being used in for instance, LTE or WiFi \cite{ART:Durisi-PIEEE2016} . 

\subsection{Related Work}
Ultra reliable communications and finite blocklength have become popular topics and many studies have been done in this field, however, there are still many issues that need to be addressed. For instance, \cite{Polyanskiy2010a}, \cite{Durisi2016,Yang2014c}, being amongst the first fundamental works in the finite blocklength area, where they set the foundation of the finite blocklength communications for cases such as block fading, MIMO and AWGN, also open up a variety of topics that need to be tackled. In \cite{ART:Durisi-PIEEE2016}, the necessity of studying and employing short packet communications is explained and is foreseen as one of the main enablers of the future telecommunication technologies. The authors bring into light the recent achievements in the field of short message transmission while also emphasizing the need for more research to be done on several open challenges. Valuable works have been done in \cite{Makki2014a,Makki2015} where authors use the finite blocklength notion to analyze the throughput of different wireless networks. In \cite{Makki2016}, a model similar to ours is investigated where they also use PPP to characterize the cellular networks and evaluate the outage and throughput of the network in which a base station is connected to its nearest neighbor. However, this is not the case in our model. In the models studied in this paper, we use PPP where users are at a fixed distance, we use a different characterization of the signal-to-interference-plus-noise ratio (SINR) distribution and constraints which have led to totally different analysis and results. Our focus is not on a single link communication, hence, the SINR in this case captures interference, and to some extend, the network dynamics as well. We provide a general approximation for the outage probability which does not rely on any specific distribution of the SINR. Here, we focus on massive connectivity constrained by reliability which are imposed on the network from the application at hand (smart grids). We show that it is possible to achieve reliable and ultra reliable communication using finite blocklength which is also a characteristic of smart metering transmissions. It is shown how important it is to know how reliability and latency are affected by the increasing number of interferers. We also propose two different schemes in order to overcome this problem, namely dynamic spectrum access and local licensing scenarios, that can be used based on the restrictions imposed by the application.

\vspace{-2ex}
\subsection{Contributions}

The followings are the main contributions of this paper.
\begin{itemize}
	\item The general expression of the outage probability as a function of the number of information bits, blocklength and density of interferers in closed-form in the finite blocklength regime.
	
	\item Closed-form approximations of the outage probability are derived for both the dynamic spectrum access and  locally licensed scenarios, under different conditions in the finite blocklength regime.
	
	\item Two different schemes are evaluated to be used in different smart grid applications based on network density and reliability requirements, considering the finite blocklength regime. We show that these schemes can not be used in every network model, since they have different requirements, while we also show when it is most suitable to use which.
	
	\item A general expression for the delay is proposed where the effect of retransmissions is investigated. 
\end{itemize}

Table \ref{tbl:1} summarizes the functions and symbols that are used in this paper. The rest of the paper is divided as follows.
	Section \ref{sect:smodel} introduces the network model with how the communication model using the short blocklength is modeled, while Section \ref{sect:tput_analysis} details the outage analysis for both of the proposed scenarios.
	Section \ref{inv} presents the numerical results and how the two scenarios can reach the reliability requirements of smart grids, and how retransmissions affect the reliability and latency. Section \ref{sect:conclusion} concludes this paper.                                

\begin{table}
	\caption{Summary of the functions and symbols.}
	\centering
	\begin{tabular}{ | l | l | l }
		\hline
		Symbol & Expression  \\ \hline
		$F_{Z}\left[\cdot\right]$ & SINR CDF  \\ \hline
		$f_{Z}\left[\cdot\right]$ & SINR PDF \\ \hline
		$\operatorname{E}_T\left[\cdot\right]$ & Expectation \\ \hline
		$\Gamma\left[\cdot\right]$ & Gamma Function \\ \hline
		$\operatorname{Q}\left[\cdot\right]$ & Gaussian Q-Function\\ \hline
		$V(z)\left[\cdot\right]$ & Channel Dispersion \\ \hline
		$T_s$ & Symbol Time \\ \hline
		$\hat{\Phi}$ & Poisson Point Process\\ \hline
		$\mathbf{x}$ & Set of Interferers' Locations \\ \hline
		$\mathbf{H}$ & Channel Fading Coefficient \\ \hline
		$\lambda$ & Network Density \\ \hline
		$|x_i|$ & Distance Between Node $x_i$ and the Reference Receiver\\ \hline 
		$h_0$ & Channel Fading Coefficient in the Reference Link \\ \hline
		${W}_{p}$ & Licensed User Transmit Power\\ \hline
		${W}_{s}$ & Unlicensed User Transmit Power\\ \hline
		$\alpha$ & Path Loss Exponent \\ \hline
		$Z$ & SINR\\ \hline
		$I$ & AWGN Power \\ \hline
		$\xi$ & Noise Level \\ \hline
		$k$ & Information Bits \\ \hline
		$n$ & Blocklength \\ \hline
		$R$ & Coding Rate \\ \hline
		$d$ & Distance Between Transmitter and Receiver \\ \hline
		$m$ & Number of Transmission Attempts\\ \hline
		$\gamma_{th}$ & SINR Threshold\\ \hline
		$\nu$ & Number of Channel Uses for ACK/NACK \\
         \hline
	\end{tabular}

     \label{tbl:1}
    \end{table}


\noindent{\textbf{Notation:}} 
The probability density function (PDF) and the cumulative distribution function (CDF) of a random variable (RV) $T$ are denoted as $f_T(t)$ and $F_T(t)$, respectively, while its expectation is $\operatorname{E}_T\left[\cdot\right]$. The gamma function is defined as $\Gamma(t)$\cite[Ch 6, \S6.1.1]{BOOK:ABRAMOWITZ-DOVER03}, and the regularized upper incomplete gamma function is denoted as $\Gamma(s,t)$ \cite[Ch 6, \S6.5.3]{BOOK:ABRAMOWITZ-DOVER03}. $\operatorname{Q}(\cdot)$ denotes the Gaussian $Q$-function $\operatorname{Q}(t)=\tfrac{1}{\sqrt{2 \pi}}\int_{t}^{\infty} \exp\left(-\frac{u^2}{2} \right)\mathrm{d}u = \tfrac{1}{2} \operatorname{Erfc}\left(\tfrac{t}{\sqrt{2}}\right)$ \cite[\S.26.2.3]{BOOK:ABRAMOWITZ-DOVER03}.

\section{Network Model}\label{sect:smodel}

 The conventional methods for evaluating communication networks are not usually a suitable choice when studying large wireless networks due to several reasons such as focusing on signal to noise ratio (SNR) rather than signal to interference plus noise ratio (SINR) or the fact that the interference in these kinds of networks depends on the path loss, meaning that it also depends on the network geometry. However, using stochastic geometry and Poisson point process to model large wireless networks have proven to be a useful tool in solving the challenges faced by the classical methods \cite{haenggi2012stochastic}. Hence, in this paper, we assume a dense network where the position of the interferers is modeled as a Poisson point process~\cite{Makki2016}.
Formally, we are dealing with a Poisson field of interferers~\cite{cardieri2010modeling} where the distribution of nodes that are causing interference follows a $2$-dimensional Poisson point process $\hat{\Phi}$ with density $\lambda>0$ (average number of nodes per $m^2$) \cite{haenggi2012stochastic} over an infinite plane \cite{Nardelli2015} \cite{de2016contention}.
This process is represented as $\hat{\Phi} = (X, H)$, where 
$X$ is the set of interferers' locations and $H$ represents the set of quasi-static channel fading coefficients in relation to the reference receiver located arbitrarily at the origin\cite{Nardelli2015,haenggi2012stochastic}. Notice that from Slivnyak theorem \cite{haenggi2012stochastic} an arbitrarily-located receiver is placed at the center of the Euclidean space and is used as a fixed point of reference which makes the estimation of the position of the surrounding elements possible \cite{haenggi2012stochastic}.
Note that $x_i \in X$ denotes a position in the 2-dimensional plane and  $i\in \mathbb{N}^+$.
Besides, $h_{i} \in H$ is assumed to be constant during the transmission of one block, which takes $n$ channel uses, and during a spatial realization of the point process. 
We assume the fading coefficients follow a Rayleigh distribution, so that $h_{i}^2$ is exponentially distributed $h_{i}^2 \sim \mathrm{Exp}(1)$.
The fading coefficient $h_0$ is associated with the reference link, composed by a transmitter located at distance $d$ from its associated receiver. It should be noted that since here we are using unbounded path loss model, $\alpha > 2$ \cite [Ch. 5]{haenggi2012stochastic}.

\subsection{Communication Model}
Signal propagation is modeled using large-scale distance-based path-loss and Rayleigh fading. 
The received power at the reference receiver from the interferer $i$ is given by
\begin{align}\label{eq:rxPower}
{I}_{i} &= W_p \, h_{i}^2 \, |x_i|^{-\alpha},
\end{align}
where $W_p$ is the transmit power, $\alpha > 2$ is the path-loss exponent and $|x_i|$ is the distance between the node $x_i$ and the reference receiver.
It is important to note that $W_p$ is related to the interferers' transmit power; the same equation is valid for the reference link $0$, which may have a different transmit power denoted by $W_s$.
Then: ${I}_{0} = W_s h_{0}^2\, d^{-\alpha}$.
The SINR\cite{wildman2014joint} is defined as the random variable $Z \triangleq \tfrac{W_s h_0 d^{-\alpha}}{I + \eta}$, and $I= \sum_{i\in \mathbb{N}^+}{I}_{i}$ and $\eta>0$ is the AWGN power.
Under these conditions the SINR cumulative distribution function (CDF)\footnote{Notice that we denote $F_Z (z) = F_Z( z | \alpha, \lambda,\zeta,\xi)$ and $f_Z (z) = f_Z( z | \alpha, \lambda,\zeta,\xi)$, except when the parameters are being manipulated, then we explicitly indicate them.}  is given as \cite[Corollary 1]{wildman2014joint}
\begin{equation}
\label{eq:cdfSINR}
F_Z( z | \alpha, \lambda, \zeta, \xi ) \!=\!  1 \!-\! \Pr\left[Z > \!z \right]\!=\! 1\!-\! \exp\!\left(\! -\! \zeta \lambda z^{\frac{2}{\alpha}} \!-\! \xi z \!  \right),
\end{equation}
where $\zeta\!=\!\kappa \pi d^2(\frac{W_p}{W_s})^\frac{2}{\alpha}$, $\xi\!=\!\frac{\eta d^\alpha}{W_\mathrm{s}}$, and $\kappa\!=\! \Gamma\left(1+\frac{2}{\alpha}\right)\Gamma\left(1-\frac{2}{\alpha}\right)$.

The probability density function (PDF) is then
\begin{equation}
\label{eq:pdfSINR}
f_Z( z | \alpha, \lambda, \zeta, \xi  )\! =\! \left( \frac{2 \lambda \zeta}{\alpha} z^{\frac{2}{\alpha}-1} \!+\! \xi \right)\! \exp\left( \!-\! \zeta \lambda z^{\frac{2}{\alpha}} \!-\! \xi z  \right).
\end{equation}

\subsection{Short Blocklength Messages}
Following \cite{ART:Durisi-PIEEE2016, Yang2014c}, we define the encoding/decoding procedures as follows.  First, the encoder maps $k$ information bits $\mathbf{B} = \{B_1,..., B_k\}$ into a codeword with $n$ symbols $\mathbf{S} = \{S_1,...,S_n\}$, satisfying the power constraint $\tfrac{1}{n}\sum_{m=1}^n |S_m|^2 \leq W_s$. Then, $\mathbf{S}$ is transmitted through the wireless channel generating $\mathbf{T} = \{T_1,...,T_n\}$ as the output. Finally, the decoder makes an estimate about the information bits based on $\mathbf{T}$, namely $\hat{\mathbf{B}}$, satisfying a maximum error probability constraint $\epsilon$.
Then, we denote $R^*(n,\epsilon)$ in bits per channel use (bpcu) as the maximal coding rate at finite blocklength (FB) which renders the largest rate $\tfrac{k}{n}$ where $k$ is the number of information bits and $n$ denotes the blocklength, whose error probability does not exceed $\epsilon$ \cite{Durisi2016}. Then, under quasi-static conditions $R^*(n,\epsilon)$ can be tightly approximated by \cite{Yang2014c}
\begin{equation}
R^{*}(n, \epsilon)\approx\! \sup\left\lbrace R\! :\! \Pr\left[ \log_2\left(1+ Z \right) < R \right]\! <\epsilon \right\rbrace. 
\end{equation}
For codes of $R = \tfrac{k}{n}$ bpcu, the outage probability in quasi-static fading is approximated as \cite{Yang2014c}
\begin{equation}\label{eq:outage}
\epsilon = 
\expectation{\operatorname{Q}\left( \sqrt{n}\, \frac{\log_2\left(1+ Z\right) - R}{\sqrt{V(Z)}} \right)}{Z},
\end{equation}
where $V(Z) = \left(1-(1+Z)^{-2}\right)(\log_2 e)^2$ is the channel dispersion and measures the stochastic variability of the channel relative to a deterministic one with the same capacity \cite{Polyanskiy2010a}. The above outage function can also be expressed as $\Pr [\mathrm{SINR} < \gamma_{th}]$, where $\gamma_{th}$ is the SI(N)R threshold of the receiver which is determined by the channel capacity and is the minimum SI(N)R which is needed in order to have a successful link connection, then the reliability can be defined as $1-\Pr [\mathrm{SINR} < \gamma_{th}]$. In other words, an outage event occurs when a transmitted message is not successfully decoded by the receiver.   

\section{Outage Analysis}\label{sect:tput_analysis}
%
%
In this section we focus on the outage probability of the network described in Section~\ref{sect:smodel}. 
The analysis is done for two different scenarios, which are special cases of the general outage expression to be presented first.
The outage probability in \eqref{eq:outage} is intricate to be evaluated in closed-form, especially when considering a general SINR distribution as in \eqref{eq:pdfSINR}.
Therefore, we resort to a tight approximation of \eqref{eq:outage} before evaluating it in closed-form for the scenarios under investigation in this work. 

\begin{proposition}
	\label{prop:gen_outage}
	Given the network described in Section~\ref{sect:smodel}, the outage probability of the reference link (the link between the reference receiver and its respective transmitter \cite{elsawy2013stochastic}) is well approximated as
	%
	\begin{equation}\label{eq:general_outage}
	\epsilon_{ap} \!=\! \frac{F_Z(\vartheta)+F_Z(\varrho)}{2} \!+\! \frac{\beta \theta \left( F_Z(\vartheta)-F_Z(\varrho) \right)}{\sqrt{2 \pi}} 
	-\!\int\limits_{\varrho}^{\vartheta} \frac{\beta z}{\sqrt{2 \pi}} f_Z(z)\mathtt{d}z ,
	\end{equation}
	where $\beta = \sqrt{\frac{n}{2\pi}} (2^{2 R}-1)^{-\frac{1}{2}}$, $\theta = 2^R -1$, $\vartheta = \theta + \sqrt{\frac{\pi}{2}\beta^{-2}} $ and $\varrho = \theta - \sqrt{\frac{\pi}{2}\beta^{-2}}$.
\end{proposition}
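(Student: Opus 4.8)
The plan is to start from the exact outage expression \eqref{eq:outage} and replace the integrand $\operatorname{Q}\!\left(\sqrt{n}\,(\log_2(1+Z)-R)/\sqrt{V(Z)}\right)$ by a piecewise-linear surrogate in the SINR variable $Z$, chosen so that the resulting expectation integrates in closed form against a \emph{generic} density $f_Z$. The natural centre of the surrogate is the point $\theta = 2^R-1$, because there $\log_2(1+Z)=R$, the argument of the $\operatorname{Q}$-function vanishes, and $\operatorname{Q}(0)=\tfrac12$; this is precisely the SINR at which the link sits at the Shannon threshold for rate $R$, so the surrogate should pass through $\tfrac12$ at $Z=\theta$.

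First I would make the $\operatorname{Q}$-argument affine in $Z$. Expanding $\log_2(1+Z)$ to first order about $Z=\theta$ and freezing the dispersion at its value $V(\theta)=(1-2^{-2R})(\log_2 e)^2$ renders the argument a linear function of $Z$ that vanishes at $\theta$, with slope controlled by $n$, $R$ and $V(\theta)$. Combining this with the local behaviour of the $\operatorname{Q}$-function near its midpoint, $\operatorname{Q}(0)=\tfrac12$ and $\operatorname{Q}'(0)=-1/\sqrt{2\pi}$, I would replace the integrand by the ramp that equals $\tfrac12$ at $\theta$, decreases linearly with slope $-\tfrac{\beta}{\sqrt{2\pi}}$, and is clipped to the values $1$ and $0$ outside the band where the linear piece remains in $[0,1]$. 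The clipping points are exactly the breakpoints of the statement: $\varrho$ is the SINR at which the ramp reaches $1$ and $\vartheta$ the SINR at which it reaches $0$, which by the construction of $\beta$ come out to $\vartheta,\varrho = \theta \pm \sqrt{\tfrac{\pi}{2}}\,\beta^{-1}$. Thus the surrogate is $1$ on $[0,\varrho)$, equal to $\tfrac12 - \tfrac{\beta}{\sqrt{2\pi}}(Z-\theta)$ on $[\varrho,\vartheta]$, and $0$ on $(\vartheta,\infty)$.

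Next I would take the expectation over $Z$ by splitting the domain at $\varrho$ and $\vartheta$. The lower piece contributes exactly $F_Z(\varrho)$, the upper piece contributes nothing, and the middle piece contributes $\int_{\varrho}^{\vartheta}\left(\tfrac12 + \tfrac{\beta\theta}{\sqrt{2\pi}} - \tfrac{\beta z}{\sqrt{2\pi}}\right)f_Z(z)\,\mathtt{d}z$. Using $\int_{\varrho}^{\vartheta} f_Z(z)\,\mathtt{d}z = F_Z(\vartheta)-F_Z(\varrho)$ for the constant part and leaving the last term as the integral of $\tfrac{\beta z}{\sqrt{2\pi}} f_Z(z)$, the $F_Z(\varrho)$ coming from the lower piece combines with the $\tfrac12(F_Z(\vartheta)-F_Z(\varrho))$ term to give $\tfrac12\left(F_Z(\vartheta)+F_Z(\varrho)\right)$, and collecting the remaining terms reproduces \eqref{eq:general_outage} exactly. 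This step uses only the generic CDF and PDF \eqref{eq:cdfSINR}--\eqref{eq:pdfSINR}, so the final expression is distribution-agnostic in $F_Z$ and $f_Z$, as claimed.

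The main obstacle is not the integration, which is routine, but justifying that the surrogate is a \emph{tight} approximation: both the midpoint linearization of $\operatorname{Q}$ and the freezing $V(Z)\approx V(\theta)$ are local about $Z=\theta$, so their accuracy relies on the $\operatorname{Q}$-transition being sharp relative to the scale over which $f_Z$ varies, i.e.\ on $n$ being moderately large so that the ramp is narrow. I would therefore argue that the approximation error is concentrated on the short interval $[\varrho,\vartheta]$, whose width $\vartheta-\varrho = \sqrt{2\pi}\,\beta^{-1}$ shrinks as $n$ grows, and that outside this interval the surrogate matches the true integrand up to exponentially small error. Care is also needed in pinning down the constants $\beta$, $\theta$, $\vartheta$ and $\varrho$ so that the ramp meets $1$ and $0$ precisely at the stated breakpoints and the slope is consistent with the clipping points.
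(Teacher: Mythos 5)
Your proposal follows essentially the same route as the paper's own proof: you replace the $\operatorname{Q}$-function integrand by the same clipped linear ramp $W(t)$ centered at $\theta=2^R-1$ with slope $-\tfrac{\beta}{\sqrt{2\pi}}$ and breakpoints $\varrho,\vartheta=\theta\mp\sqrt{\pi/2}\,\beta^{-1}$ (the linearization the paper borrows from its reference \cite{Makki2014a}), and then integrate against the generic $f_Z$ by splitting the domain at $\varrho$ and $\vartheta$, which yields \eqref{eq:general_outage} exactly as in Appendix~A. Your additional remarks on where $\beta$ comes from (the derivative of $\operatorname{Q}(g(t))$ at $t=\theta$ with the dispersion frozen) and on the shrinking width $\vartheta-\varrho=\sqrt{2\pi}\,\beta^{-1}$ are consistent with, and slightly more explicit than, the paper's justification.
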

\begin{proof}
	See Appendix~\ref{app:gen_outage}.
\end{proof}

Note that \eqref{eq:general_outage} covers a wide range of scenarios with path-loss exponent $\alpha > 2$. It is worth mentioning that only one integral remains in \eqref{eq:general_outage}, and the overall expression is composed of well known functions, which facilitates its integration by numerical methods compared with the original expression in \eqref{eq:outage}. 

\subsection{Dynamic Spectrum Access Interference-Limited Scenario}

\begin{figure}[tb]
	\centering
	\includegraphics[width=1\columnwidth]{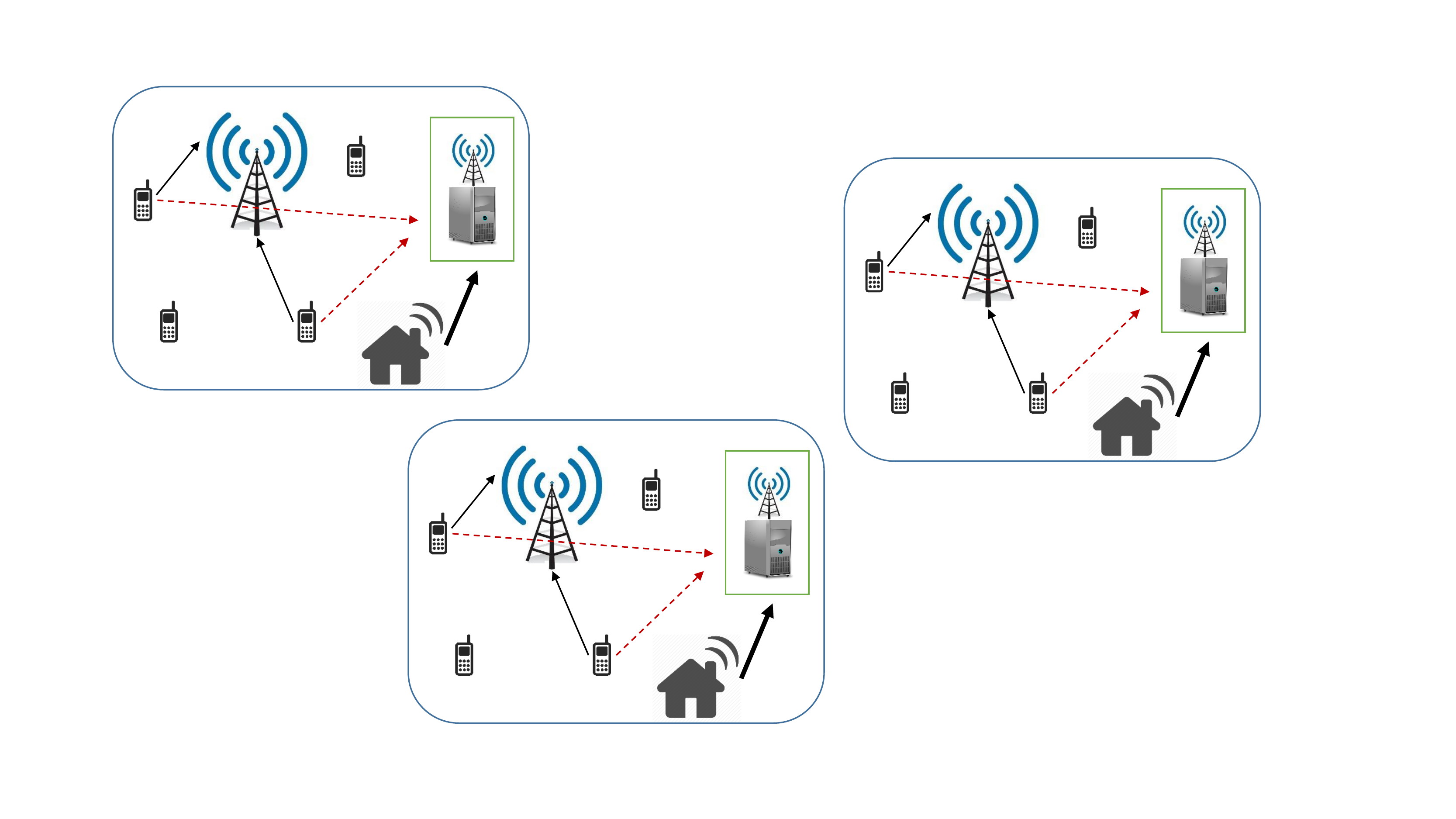}
     \caption{An illustration of the dynamic spectrum access scenario, where licensed and unlicensed users share the up-link channel. The reference smart meter (unlicensed transmitter) is depicted by the house, the aggregator (unlicensed receiver) by the CPU and its antenna, the handsets are the mobile licensed users (interferers to the aggregator) and the big antenna is the cellular base-station. As the smart meter uses directional antennas with limited transmit power (bold arrow), its interference towards the base-station can be ignored. The thin black arrows represent the licensed users' desired signal, while the red ones represent their interference towards the aggregator.}
	\label{fig:ss_mod}
\end{figure}

We consider a dynamic spectrum access model, shown in Fig. \ref{fig:ss_mod}, where licensed and unlicensed users share the frequency bands allocated to the uplink channel~\cite{tome2016joint}. 
We assume an interference-limited scenario where the licensed and unlicensed transmission powers are respectively $W_p$ and $W_s$. 
In this case, the noise power is negligible with respect to the aggregated interference~\cite{Nardelli2015}. 
The licensed users are mobile users communicating with a cellular base-station while the unlicensed users are the smart-meters that send data to their corresponding aggregator. 
The interference in this model is generated by mobile users with respect to the aggregator (reference receiver), as discussed in \cite{Nardelli2015}. 
%


\begin{proposition}
	\label{prop:gen_outagess}
     Assuming the network deployment described in Section \ref{sect:smodel} and the reference scenario in section \ref{sect:tput_analysis}-A, in an interference limited scenario where $\xi \approx 0$ , the outage probability is
	\begin{align}
	\label{eq:outage_SIR}
	\epsilon_\mathrm{SS}\! &=\!  
	\frac{\beta (\vartheta \!-\! \varrho)}{\sqrt{2\pi}} \!+\! \frac{\alpha \beta}{(\zeta\lambda)^{\frac{\alpha}{2}} 2 \sqrt{2\pi}} \nonumber\\  &\times
	\bigg[\Gamma\bigg(\frac{\alpha}{2}\,,\, \zeta \lambda \vartheta^{\frac{2}{\alpha}} \bigg) \!-\! \Gamma\bigg(\frac{\alpha}{2}\,,\, \zeta \lambda \varrho^{\frac{2}{\alpha}}\bigg)\bigg].
	\end{align}
\end{proposition}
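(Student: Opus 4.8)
The plan is to obtain \eqref{eq:outage_SIR} as a direct specialization of Proposition~\ref{prop:gen_outage}. Setting $\xi\approx 0$ in \eqref{eq:cdfSINR} and \eqref{eq:pdfSINR} collapses the SINR statistics to the interference-limited forms $F_Z(z)=1-\exp(-\zeta\lambda z^{2/\alpha})$ and $f_Z(z)=\frac{2\lambda\zeta}{\alpha}z^{2/\alpha-1}\exp(-\zeta\lambda z^{2/\alpha})$, which I would insert into \eqref{eq:general_outage}. The simplification I would exploit first is to split the integrand as $z=(z-\theta)+\theta$: the $\theta$-weighted piece of the subtracted integral equals $\frac{\beta\theta}{\sqrt{2\pi}}(F_Z(\vartheta)-F_Z(\varrho))$ and cancels the second term of \eqref{eq:general_outage} verbatim. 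This reduces the outage to the cleaner form $\epsilon_{ap}=\tfrac12(F_Z(\vartheta)+F_Z(\varrho))-\tfrac{\beta}{\sqrt{2\pi}}\int_\varrho^\vartheta(z-\theta)f_Z(z)\,\mathtt{d}z$.

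Next I would integrate the surviving term by parts, writing $\int_\varrho^\vartheta(z-\theta)f_Z(z)\,\mathtt{d}z=[(z-\theta)F_Z(z)]_\varrho^\vartheta-\int_\varrho^\vartheta F_Z(z)\,\mathtt{d}z$. The boundary term evaluates to $\tfrac{1}{\beta}\sqrt{\pi/2}\,(F_Z(\vartheta)+F_Z(\varrho))$ because $\vartheta-\theta=-(\varrho-\theta)=\sqrt{\tfrac{\pi}{2}\beta^{-2}}$, and the only genuinely new integral, $\int_\varrho^\vartheta F_Z(z)\,\mathtt{d}z=(\vartheta-\varrho)-\int_\varrho^\vartheta\exp(-\zeta\lambda z^{2/\alpha})\,\mathtt{d}z$, is handled by the change of variables $u=\zeta\lambda z^{2/\alpha}$. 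This maps the exponential integral onto $\tfrac{\alpha}{2}(\zeta\lambda)^{-\alpha/2}\int_{\zeta\lambda\varrho^{2/\alpha}}^{\zeta\lambda\vartheta^{2/\alpha}}u^{\alpha/2-1}e^{-u}\,\mathtt{d}u$, i.e.\ a difference of upper incomplete gamma functions $\Gamma(\tfrac{\alpha}{2},\zeta\lambda\varrho^{2/\alpha})-\Gamma(\tfrac{\alpha}{2},\zeta\lambda\vartheta^{2/\alpha})$, which is exactly where the $\Gamma(\tfrac{\alpha}{2},\cdot)$ terms of \eqref{eq:outage_SIR} originate.

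Finally I would collect terms. The key numerical identity is $\tfrac{\beta}{\sqrt{2\pi}}\cdot\tfrac{1}{\beta}\sqrt{\pi/2}=\tfrac12$, so the boundary contribution $-\tfrac{\beta}{\sqrt{2\pi}}\cdot\tfrac{1}{\beta}\sqrt{\pi/2}(F_Z(\vartheta)+F_Z(\varrho))$ cancels the leading $\tfrac12(F_Z(\vartheta)+F_Z(\varrho))$ term, eliminating every explicit CDF/exponential contribution; what survives is precisely $\tfrac{\beta(\vartheta-\varrho)}{\sqrt{2\pi}}$ together with the incomplete-gamma difference carrying the constant $\tfrac{\alpha\beta}{2\sqrt{2\pi}(\zeta\lambda)^{\alpha/2}}$, which is \eqref{eq:outage_SIR}. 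I expect the main obstacle to be bookkeeping rather than any hard analysis: the delicate points are (i) recognizing that the split $z=(z-\theta)+\theta$ together with $\beta(\vartheta-\theta)/\sqrt{2\pi}=\tfrac12$ forces all CDF terms to vanish, and (ii) choosing the substitution so that the exponent $\tfrac{\alpha}{2}-1$ lands on an incomplete gamma with first argument $\tfrac{\alpha}{2}$ rather than $\tfrac{\alpha}{2}+1$, since the latter would arise from a less careful route and then require the recurrence $\Gamma(s+1,x)=s\Gamma(s,x)+x^se^{-x}$ to reconcile with the stated form.
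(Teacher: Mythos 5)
Your derivation is correct and lands exactly on \eqref{eq:outage_SIR}: the cancellation $\tfrac{\beta}{\sqrt{2\pi}}\cdot\tfrac{1}{\beta}\sqrt{\pi/2}=\tfrac12$ and the substitution $u=\zeta\lambda z^{2/\alpha}$ both check out. The overall strategy matches the paper's (specialize Proposition~\ref{prop:gen_outage} to $\xi=0$ and express the surviving integral through upper incomplete gamma functions), but your bookkeeping is genuinely different and, in my view, cleaner. The paper's appendix evaluates $\int_\varrho^\vartheta z f_Z(z)\,\mathtt{d}z$ head-on, which after the same substitution produces $u^{\alpha/2}e^{-u}$, i.e.\ $\Gamma\left(\tfrac{\alpha}{2}+1,\cdot\right)$ terms that must then be folded back into the explicit $F_Z(\vartheta)$ and $F_Z(\varrho)$ terms of \eqref{eq:general_outage} via the recurrence $\Gamma(s+1,x)=s\Gamma(s,x)+x^{s}e^{-x}$ --- precisely the ``few algebraic manipulations'' the appendix leaves unstated, and exactly the pitfall you flag at the end. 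Your split $z=(z-\theta)+\theta$ followed by a single integration by parts instead collapses all of \eqref{eq:general_outage} to the identity $\epsilon_{ap}=\tfrac{\beta}{\sqrt{2\pi}}\int_\varrho^\vartheta F_Z(z)\,\mathtt{d}z$ \emph{before} any distributional assumption is invoked, which makes the vanishing of every boundary CDF term structural rather than coincidental and delivers the $\Gamma\left(\tfrac{\alpha}{2},\cdot\right)$ form with the correct prefactor in one step. One minor remark: the paper's notation section declares $\Gamma(s,t)$ to be the \emph{regularized} upper incomplete gamma function, whereas your derivation (and, as Corollary~\ref{ssou} with $\Gamma(2,x)=(1+x)e^{-x}$ confirms, the stated result itself) uses the unregularized one; for general $\alpha$ the two differ by a factor $\Gamma(\alpha/2)$, so this is an inconsistency in the paper's notation rather than a gap in your argument.
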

\begin{proof}

	See Appendix~\ref{app:gen_outagess}.	
\end{proof}
\begin{corollary}
	\label{ssou}
	For the special case of interference-limited dense urban scenarios, where $\alpha=4$ is a good approximation for the path loss exponent,
	the outage probability in \eqref{eq:outage_SIR} reduces to
	\begin{align}\label{eq:outage_SIR4}
	&\epsilon_{\mathrm{SS}_{\alpha=4}}
	 =
	\frac{\beta (\vartheta-\varrho)}{\sqrt{2\pi}}+\frac{2\beta}{\sqrt{2\pi}(\zeta\lambda)^2}\nonumber\\
	&\times\bigg [  e^{-\zeta\lambda \sqrt{\vartheta}} \left(\frac{\zeta\lambda\vartheta}{\sqrt{\vartheta}}+1 \right)  - e^{-\zeta\lambda \sqrt{\varrho}} \left(\frac{\zeta\lambda\varrho}{\sqrt{\varrho}}+1 \right)\bigg ].
	\end{align}
\end{corollary}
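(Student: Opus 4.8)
The plan is to specialize the general interference-limited expression \eqref{eq:outage_SIR} to $\alpha = 4$ and to evaluate the two upper incomplete gamma terms in closed form. First I would substitute $\alpha = 4$ throughout \eqref{eq:outage_SIR}, which turns the order of the incomplete gamma function into $\tfrac{\alpha}{2} = 2$, the exponent appearing in its argument into $\tfrac{2}{\alpha} = \tfrac{1}{2}$, and the leading power into $(\zeta\lambda)^{\alpha/2} = (\zeta\lambda)^2$. The prefactor then collapses, $\frac{\alpha\beta}{(\zeta\lambda)^{\alpha/2}\,2\sqrt{2\pi}} = \frac{4\beta}{2\sqrt{2\pi}(\zeta\lambda)^2} = \frac{2\beta}{\sqrt{2\pi}(\zeta\lambda)^2}$, which already matches the coefficient appearing in \eqref{eq:outage_SIR4}. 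The first summand $\frac{\beta(\vartheta-\varrho)}{\sqrt{2\pi}}$ is unaffected by the choice of $\alpha$ and carries over verbatim.

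The key step is to close the incomplete gamma function at the integer order $s=2$. Using $\Gamma(s,x)=\int_x^\infty t^{s-1}e^{-t}\,\mathtt{d}t$ and integrating by parts once, I obtain $\Gamma(2,x)=\int_x^\infty t\,e^{-t}\,\mathtt{d}t = (x+1)e^{-x}$. Evaluating this at $x = \zeta\lambda\vartheta^{1/2} = \zeta\lambda\sqrt{\vartheta}$ and at $x = \zeta\lambda\varrho^{1/2} = \zeta\lambda\sqrt{\varrho}$ yields the two terms $(\zeta\lambda\sqrt{\vartheta}+1)e^{-\zeta\lambda\sqrt{\vartheta}}$ and $(\zeta\lambda\sqrt{\varrho}+1)e^{-\zeta\lambda\sqrt{\varrho}}$, respectively. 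It is worth remarking that at $s=2$ the regularized and non-regularized conventions agree, since $\Gamma(2)=1$, so no ambiguity arises from the normalization. Finally, rewriting $\zeta\lambda\sqrt{\vartheta} = \frac{\zeta\lambda\vartheta}{\sqrt{\vartheta}}$ (and likewise for $\varrho$) casts the result in exactly the form stated in \eqref{eq:outage_SIR4}, completing the reduction.

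There is no real analytical obstacle here: the corollary is a direct specialization of \eqref{eq:outage_SIR} rather than a fresh derivation, and the only nontrivial ingredient is the elementary closed form of $\Gamma(2,\cdot)$. The point to be careful about is the bookkeeping, namely correctly propagating $\alpha=4$ into every power and argument, confirming that the prefactor simplifies to the claimed $\frac{2\beta}{\sqrt{2\pi}(\zeta\lambda)^2}$, and matching the algebraically equivalent forms $\zeta\lambda\sqrt{\vartheta}$ and $\frac{\zeta\lambda\vartheta}{\sqrt{\vartheta}}$ so that the output is presented precisely as in the statement.
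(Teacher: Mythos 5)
Your derivation is correct and matches the route the paper intends: the corollary is stated without a separate proof precisely because it is the direct specialization of \eqref{eq:outage_SIR} at $\alpha=4$, using $\Gamma(2,x)=(x+1)e^{-x}$ (with no normalization ambiguity since $\Gamma(2)=1$) and the rewriting $\zeta\lambda\sqrt{\vartheta}=\tfrac{\zeta\lambda\vartheta}{\sqrt{\vartheta}}$. Nothing is missing.
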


\begin{figure*}[tb]
	\centering
	\includegraphics[width=2\columnwidth]{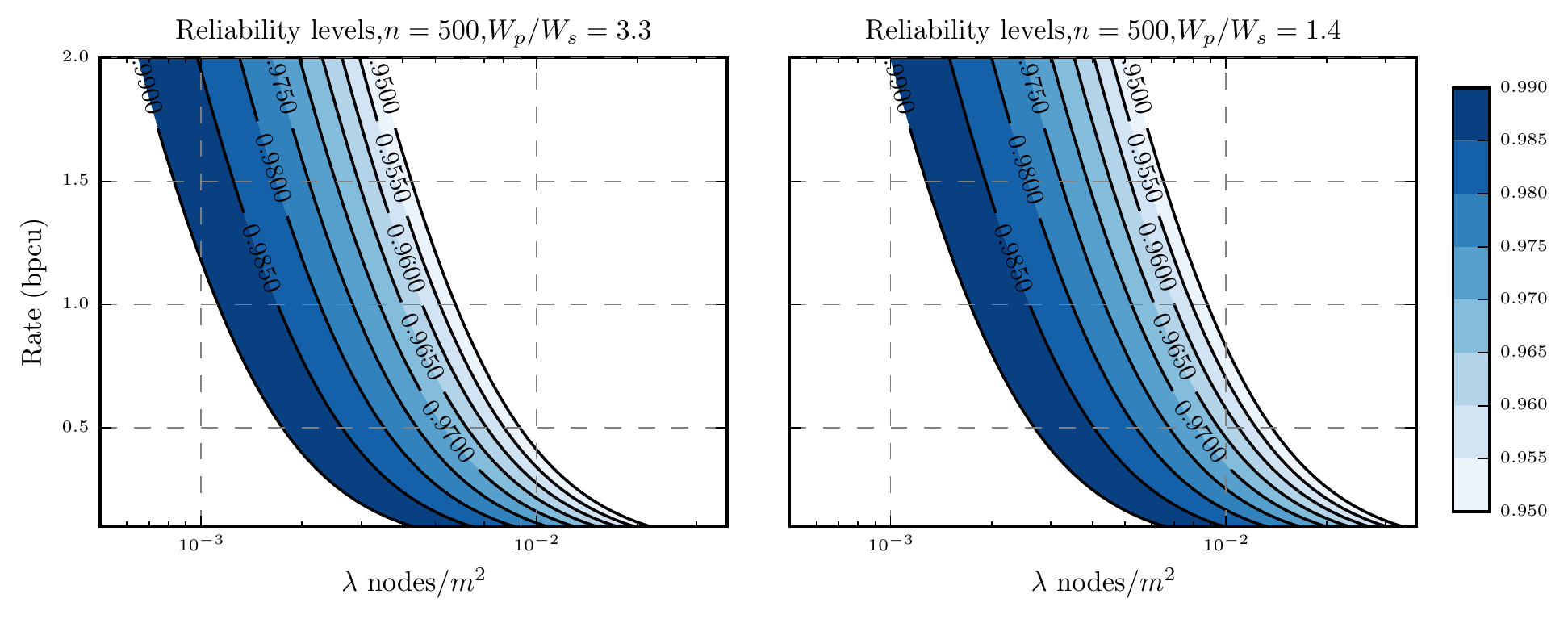}
	\caption{Dynamic spectrum access scenario outage probability $\epsilon$ as a function of the network density $\lambda$ and coding rate $R$, where $d=1$ and $\eta=1$. }
	\label{fig:outagectr}
\end{figure*}

The effect of different transmit powers $W_s$ on the outage probability as a function of the network density is shown in Fig. \ref{fig:outagectr}.
We can see that, as the transmit power increases, the link can reach a higher reliability level in denser networks even with a high rate.
It should be noted that this model is suitable for applications that do not require so strict reliability levels, such as smart meter reading.

As the noise level is negligible compared with the interference, the density of interfering nodes shall be high.
The mobile users -- the source of interference in this case -- transmit with a higher power compared with the smart meters.
%

\subsection{Locally Licensed Scenario}

\begin{figure}[tb]
	\centering
	\includegraphics[width=\columnwidth]{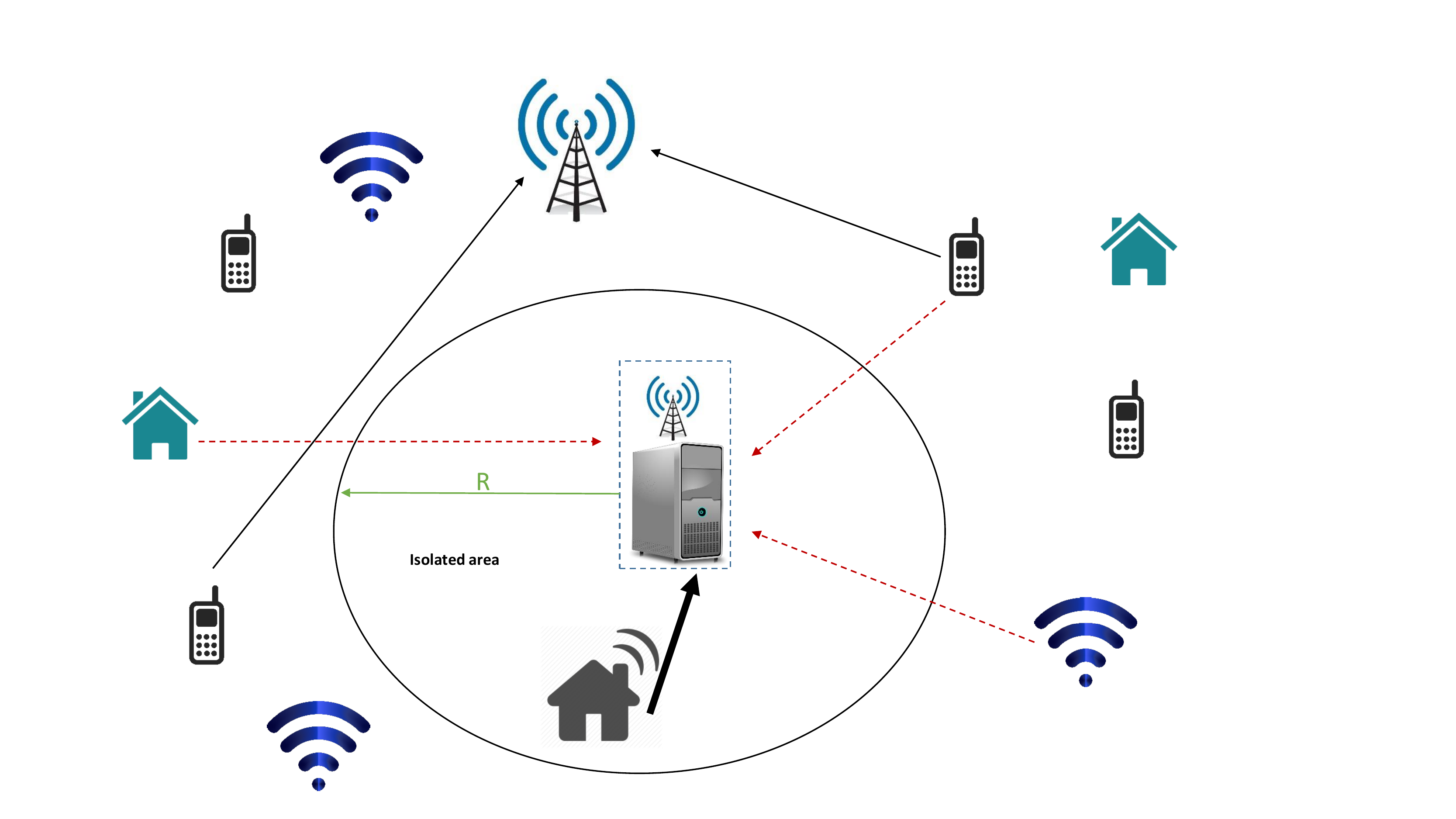}
	\caption{An illustration of the locally licensed scenario, where the micro operator holds an exclusive license for its own usage in a specific geographical region (isolated area). There is interference caused by the entities outside of this area such as mobile phones and Wi-Fi. However, since the level of interference in this area is very low, noise is what is going to harm the communications in this model. The reference smart meter is depicted by the house, the micro-operator  by the CPU and its antenna, the handsets are the mobile licensed users (interferers to the aggregator) and the big antenna is the cellular base-station. The thin black arrows represent the users' desired signal, while the red ones represent the interference coming from outside of the area.}
	\label{fig:mo_mod}
\end{figure}
In this case, we analyze a locally licensed scenario, using the $\mu\mathrm{O}$ concept, where the previous unlicensed link is now also a licensed user in the system in a specific geographical region.
Although this concept guarantees the exclusive usage of the frequency band, 
the  environment is still unfriendly: there are different entities that may cause interference such as base stations, smart meters and mobile users.
Therefore, instead of assuming no interference, we consider a point process with low density $\lambda$, leading to low interference levels. However, unlike the previous case, the noise level is not negligible in this case anymore and it will affect the reliability of the system.
The outage probability is given next.

\begin{figure}[tb]
	\centering
	\includegraphics[width=\columnwidth]{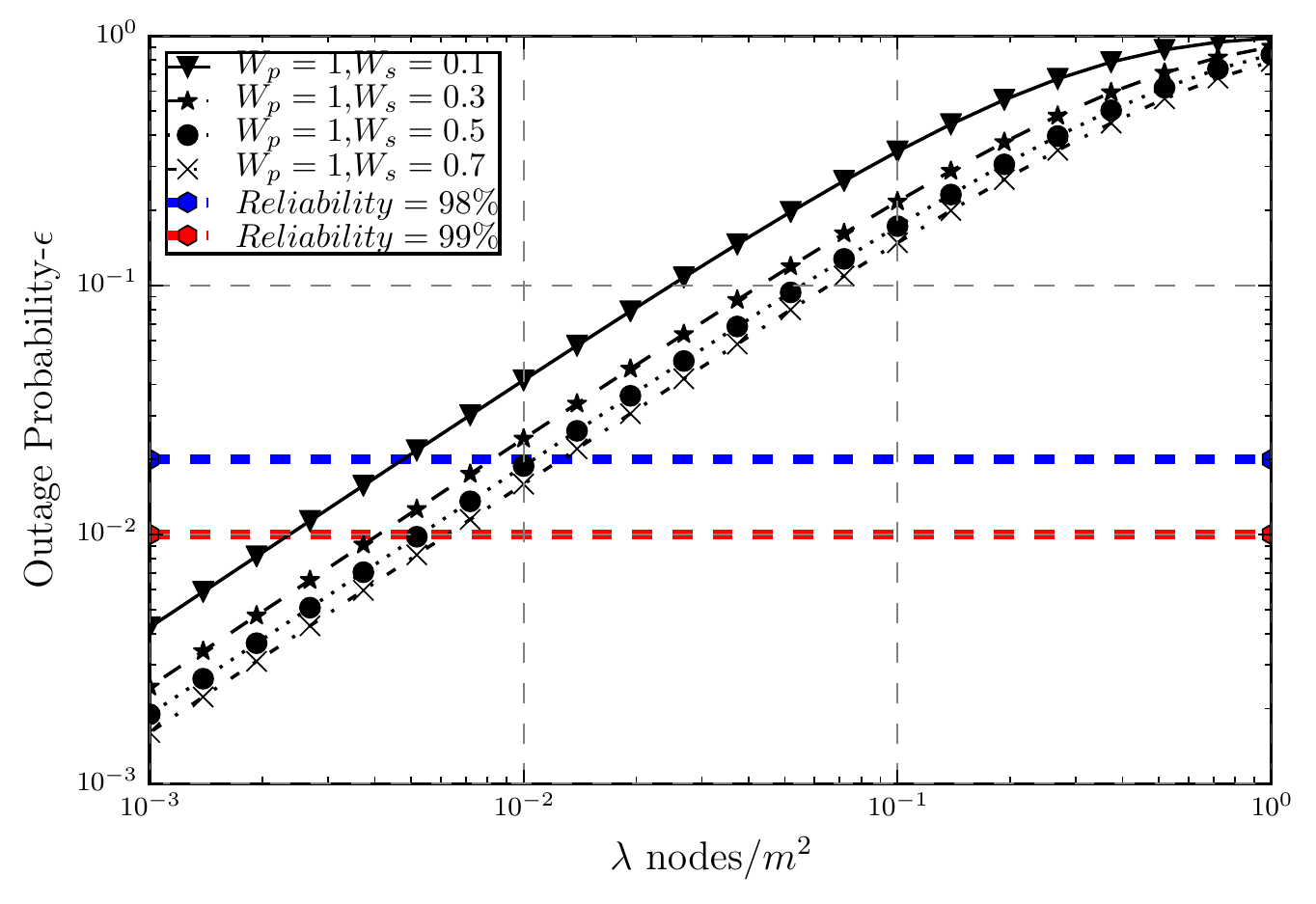}
	\caption{Dynamic spectrum access scenario outage probability  $\epsilon$ with different transmit power as a function of the network density $\lambda$, considering $d=1$, $\eta=1$, $n=200$,  and $R=0.1$. }
	\label{fig:outage_w}
\end{figure}

\begin{figure}[tb]
	\centering
	\includegraphics[width=\columnwidth]{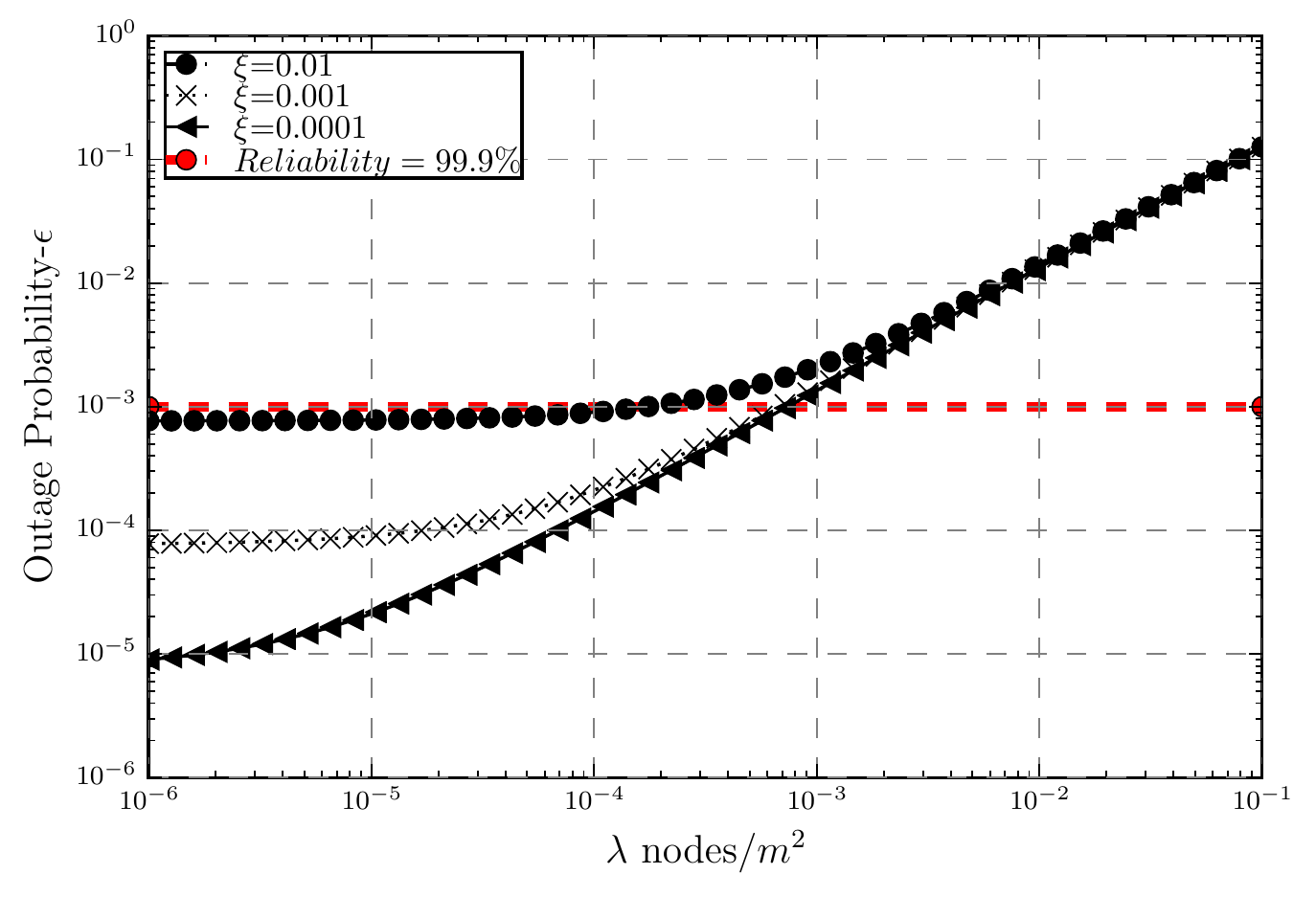}
	\caption{Locally licensed scenario outage probability  $\epsilon$ with different noise levels as a function of the network density $\lambda$, considering $d=1$, $n=200$, and $R=0.1$. }
	\label{fig:sc2ou}
\end{figure}

\begin{proposition} \label{prop:outage4}
	For the network model described in Section \ref{sect:smodel} and the characteristics of the $\mu\mathrm{O}$ scenario described in Section \ref{sect:tput_analysis}-B, the outage probability is given in \eqref{eq:outage_4mo} on top of page 8 where $\alpha=4$, hence, \eqref{eq:cdfSINR} is denoted by $F_Z(\vartheta |\xi)$ .
	\begin{figure*}[!t]
		\begin{align}\label{eq:outage_4mo}
		\overline{\epsilon}_{u
			o} &= 
		[1-F_Z(\vartheta |\xi)](\frac{-1}{2}-\frac{\beta \theta }{\sqrt{2\pi}})+[1-F_Z(\varrho |\xi)](\frac{-1}{2}+\frac{\beta \theta }{\sqrt{2\pi}})\nonumber
		-\frac{[1-F_Z(\varrho |\xi)][1-F_Z(\vartheta |\xi)]}{2\sqrt{2\pi}\xi^\frac{3}{2}}\nonumber\\
		&\times\bigg[2\beta\sqrt{\xi}((F_Z^{-1}(\vartheta |\xi)(1+\xi\varrho)-(F_Z^{-1}(\varrho |\xi)(1+\xi\vartheta) )\nonumber                                       
		+\exp(\zeta\lambda(\sqrt{\varrho}+\sqrt{\vartheta})+\xi(\varrho+\vartheta)+\frac{\zeta^2\lambda^2}{4\xi}   ) \sqrt{\pi}\beta\zeta\lambda       \nonumber \\
		&\times\left[\operatorname{erf}\left( \frac{\zeta \lambda  }{2\sqrt{\xi}} +\sqrt{\varrho\xi}\right) - \operatorname{erf}\left( \frac{\zeta \lambda  }{2\sqrt{\xi}}+{\sqrt{\vartheta\xi}} \right)   \right] \bigg].
		\end{align}
		\hrule 
	\end{figure*}
\end{proposition}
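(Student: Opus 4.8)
The plan is to specialize the general approximation of Proposition~\ref{prop:gen_outage} to the path-loss exponent $\alpha=4$ and then evaluate the single remaining integral in closed form. Setting $\alpha=4$ in \eqref{eq:cdfSINR} and \eqref{eq:pdfSINR} gives $z^{2/\alpha}=\sqrt{z}$, so that $F_Z(z\mid\xi)=1-\exp(-\zeta\lambda\sqrt{z}-\xi z)$ and $f_Z(z\mid\xi)=\left(\tfrac{\zeta\lambda}{2\sqrt{z}}+\xi\right)\exp(-\zeta\lambda\sqrt{z}-\xi z)$, where now $\xi$ is retained because the noise is no longer negligible. I would first rewrite the two non-integral terms of \eqref{eq:general_outage} in terms of the survival function $1-F_Z$; this immediately produces the coefficients $(-\tfrac12\mp\tfrac{\beta\theta}{\sqrt{2\pi}})$ multiplying $[1-F_Z(\vartheta\mid\xi)]$ and $[1-F_Z(\varrho\mid\xi)]$ that appear on the first line of \eqref{eq:outage_4mo}.

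The crux is the integral $J=\int_{\varrho}^{\vartheta} z\,f_Z(z)\,\mathtt{d}z$. I would attack it with the substitution $u=\sqrt{z}$, which maps the awkward mixed exponent $-\zeta\lambda\sqrt{z}-\xi z$ to $-\zeta\lambda u-\xi u^2$ and turns the integrand into a polynomial times a Gaussian kernel, namely $J=\int_{\sqrt{\varrho}}^{\sqrt{\vartheta}}(\zeta\lambda u^2+2\xi u^3)\,g(u)\,\mathtt{d}u$ with $g(u)=\exp(-\zeta\lambda u-\xi u^2)$. The key observation is that $g'(u)=-(\zeta\lambda+2\xi u)g(u)$, so the integrand is exactly $-u^2 g'(u)$. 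Integrating by parts then yields $J=\big[-u^2 g(u)\big]_{\sqrt{\varrho}}^{\sqrt{\vartheta}}+2\int u\,g(u)\,\mathtt{d}u$, and the boundary term evaluates to single survival functions $\varrho[1-F_Z(\varrho\mid\xi)]-\vartheta[1-F_Z(\vartheta\mid\xi)]$.

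The residual integral $\int u\,g(u)\,\mathtt{d}u$ is reduced once more via the same identity, writing $u\,g=-\tfrac{1}{2\xi}(g'+\zeta\lambda g)$, which contributes a further survival-function boundary term together with $-\tfrac{\zeta\lambda}{2\xi}\int g(u)\,\mathtt{d}u$. Completing the square, $-\zeta\lambda u-\xi u^2=-\xi\big(u+\tfrac{\zeta\lambda}{2\xi}\big)^2+\tfrac{\zeta^2\lambda^2}{4\xi}$, reduces $\int g\,\mathtt{d}u$ to a Gaussian whose antiderivative is the error function evaluated at $\tfrac{\zeta\lambda}{2\sqrt{\xi}}+\sqrt{\xi\vartheta}$ and $\tfrac{\zeta\lambda}{2\sqrt{\xi}}+\sqrt{\xi\varrho}$; this is the origin of the $\operatorname{erf}$ terms and of the prefactor $\exp\!\big(\tfrac{\zeta^2\lambda^2}{4\xi}\big)$ in \eqref{eq:outage_4mo}.

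The final step is bookkeeping: collecting the boundary terms (each a single survival function, with the algebraic factors $1+\xi\varrho$ and $1+\xi\vartheta$ arising from $\varrho+\tfrac1\xi$ and $\vartheta+\tfrac1\xi$) together with the $\operatorname{erf}$ contribution, then factoring out the product $[1-F_Z(\varrho\mid\xi)][1-F_Z(\vartheta\mid\xi)]$ to match the bracketed structure of \eqref{eq:outage_4mo}. I expect the main obstacle to be precisely this last repackaging together with the two successive reductions of the polynomial degree: the mixed $\sqrt{z}$--$z$ exponent means no single substitution linearises everything, so one must chain the exact-derivative identity, integration by parts, and the completion of the square while carefully tracking the surviving exponential factors so that they recombine into the complementary-CDF form claimed in the statement.
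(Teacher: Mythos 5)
Your proposal is correct and follows essentially the same route as the paper's Appendix C: specialize \eqref{eq:general_outage} to $\alpha=4$, evaluate $\int_\varrho^\vartheta z f_Z(z)\,\mathtt{d}z$ by integration by parts, and reduce the residual Gaussian-type integral to error functions by completing the square, then refactor everything through the product $[1-F_Z(\varrho|\xi)][1-F_Z(\vartheta|\xi)]$ (the paper simply cites the Gradshteyn--Ryzhik entries \S2.33-10 and \S2.33-16 for the $u$-integrals that you derive by hand via the substitution $u=\sqrt{z}$ and the identity $g'=-(\zeta\lambda+2\xi u)g$). The only caveat is bookkeeping-level: rewriting the first two terms of \eqref{eq:general_outage} in survival-function form leaves an additive constant that the printed \eqref{eq:outage_4mo} does not display, but this is an inconsistency of the stated formula itself, not of your derivation.
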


\begin{proof}
	See Appendix~\ref{app:gen_outage4}.
\end{proof}

The $\mu\mathrm{O}$ scenario cannot be assumed interference-limited; on the contrary, the noise power here is the major factor in the SINR.
%
%
To compute the numerical results, we assume here $W_p = W_s=1$ and $\xi=0.001$. 
Recall that in the interference limited scenario, when the interference is low, we can achieve a low outage probability even with a high network density for a given coding rate. 
Increasing the smart meters transmit power results in having a higher success probability. 
As the interference power increases, a lower coding rate is needed  to have the same outage probability with the same density. 

In terms of outage, the $\mu\mathrm{O}$ scenario, in its turn, behaves similarly, meaning that with increasing the network density, the outage probability increases, however, the outage probability is generally lower in this model as the interference level is negligible and noise is the main factor affecting the performance of the network.
The operating regions for the dynamic spectrum access and the $\mu\mathrm{O}$ scenarios are shown in Figs. \ref{fig:outage_w} and \ref{fig:sc2ou}, respectively.
The presented outage levels are chosen based on the fact that the dynamic spectrum access model is suitable for the non-critical applications, which we illustrate by reliabilities between $98\%$ to $99\%$, which is relatively high for some applications (for smart grid application, refer to \cite[Table.3]{kuzlu2014communication}, and will be further discussed in the next section).

On the other hand, when we assess the $\mu O$, one can see a very high reliability with error probability as low as $0.1\%$ (i.e. reliability $ \geq 99.9\%$ ).
In other words, $\mu O$ is suitable for critical applications with high reliability requirements. 
Thus, the operating region for this model is presented as the area where $\overline{\epsilon}_{u 	o} \leq 10^{-3}$.
 
\subsection {On the Accuracy of ($6$) - ($9$)} \label{khar}

As it was mentioned earlier in this section, we use an approximation of \eqref{eq:outage} for calculating the outage probability in \eqref{eq:general_outage} which is then used to derive the closed form equations of the outage probability for different scenarios presented in \eqref{eq:outage_SIR}-\eqref{eq:outage_4mo}. In this section, the approximation is compared to the exact equation for both the interference and noise limited scenarios, as shown in Figs. \ref{f5} and \ref{fig2} respectively. We can see that the results from the approximated and closed form equations are almost always equal to the exact equation \eqref{eq:outage}. Considering the error metric below 

 \begin{equation}
 \label{eq:error}
 \Delta=|\frac{\epsilon-\epsilon_{ap} }{\epsilon} |,
 \end{equation}
 
\noindent we can see that $\Delta$ is almost always either zero or very close to zero. It is only for the case of the locally licensed scenario that we can see, for a very low values of $\lambda$, a difference of at most $4\%$ between \eqref{eq:outage} and the approximation which is still a very low difference. A more elaborated analysis of the error metric can be found in \cite{lopez2017wireless}.

\begin{figure}[]
	\centering
	\includegraphics[width=\columnwidth]{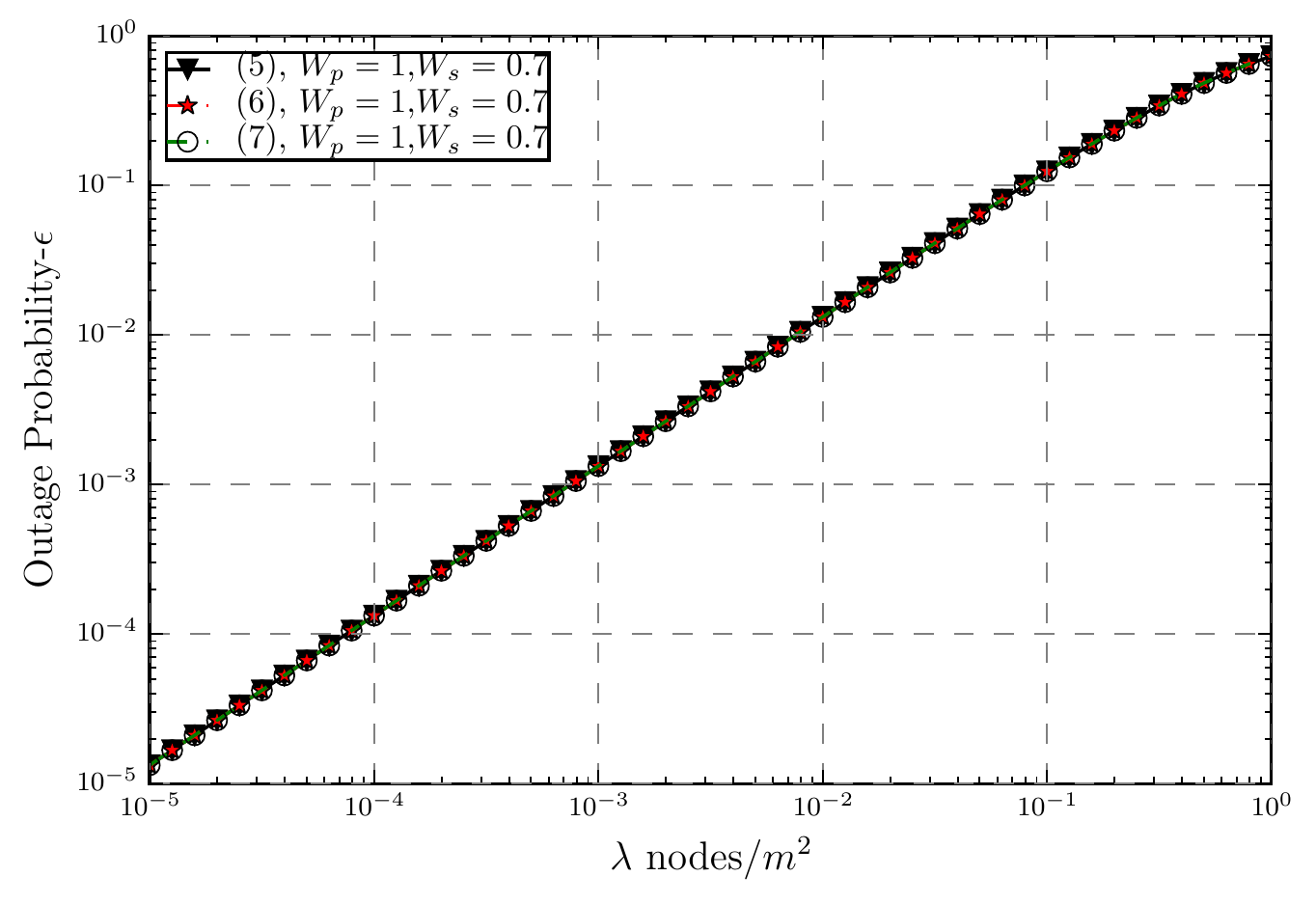}
	\caption{Comparison of the accuracy of the approximation used in (6) and (7) compared to (5) for the Dynamic spectrum access scenario outage probability  $\epsilon$ as a function of the density $\lambda$, considering $d=1$, $\eta=1$, $n=500$, and $R=0.1$. }\label{f5}
	
\end{figure}

\begin{figure}[tb]
	\centering
	\includegraphics[width=\columnwidth]{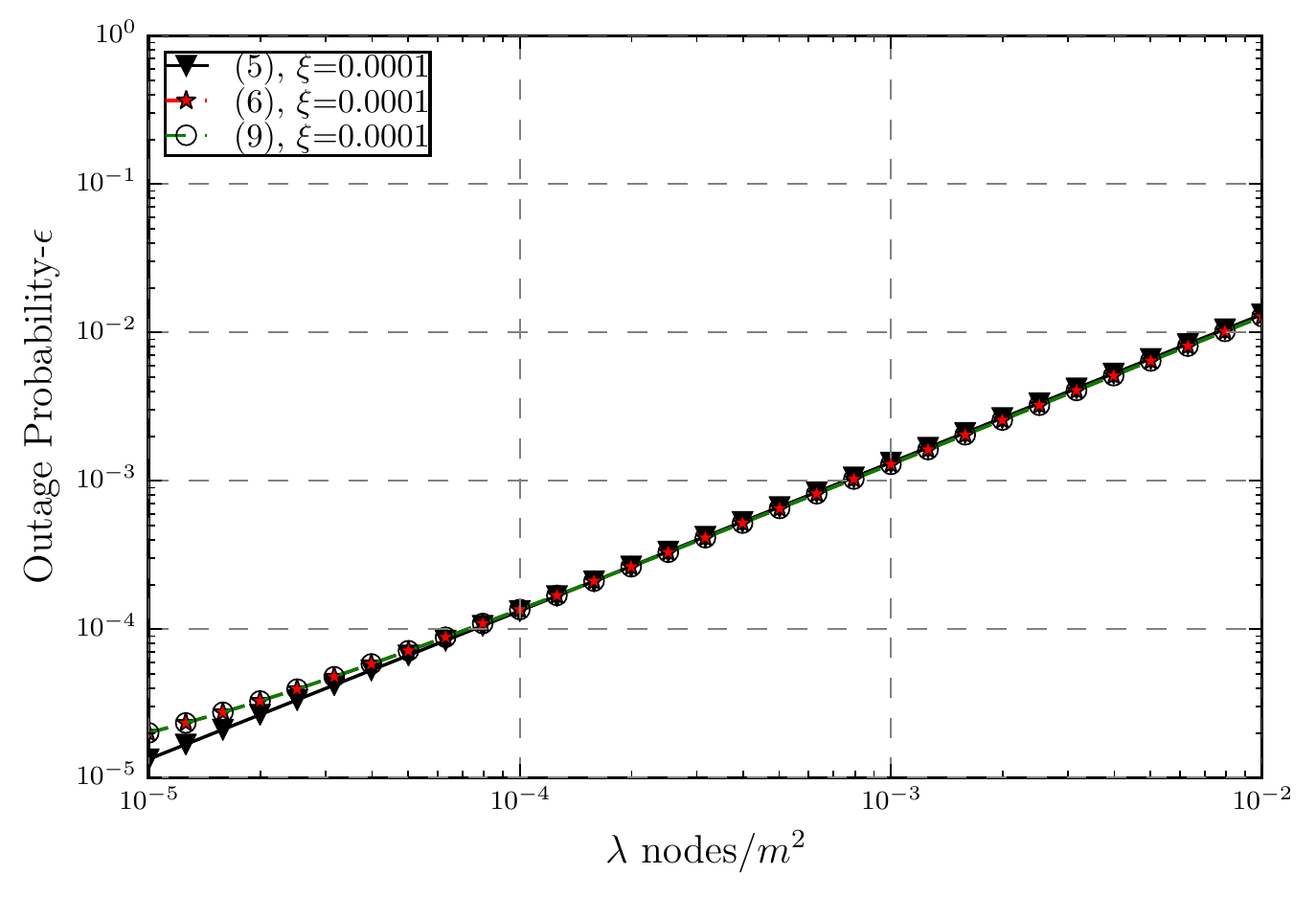}
	\caption{Comparison of the accuracy of the approximation used in (6) and (9) compared to (5) for the locally licensed scenario outage probability  $\epsilon$ as a function of the density $\lambda$, considering $d=1$, $n=500$, and $R=0.1$. }\label{fig2}
	
\end{figure}
 
\section {Meeting the Smart Grid Requirements}\label{inv}
This section focuses on the specific requirements for different smart grid applications.
Specifically, we analyze the impact of blocklength, retransmission attempts, and network density on the outage probability of the two proposed scenarios.
%
	%
%




\subsubsection{Dynamic Spectrum Access Scenario}

This scenario is suitable for applications like smart meters periodic transmissions  \cite{kuzlu2014communication}.
The frequency of the transmissions might be relatively high over one day \cite{tome2016joint}: the smart meters transmit data every $15$ minutes during a period of $24$ hours which means smart meters need to transmit data $96$ times per day.

The properties explained above can be seen in Fig.\ref{fig:t_w}-a where the behavior of $\lambda (\epsilon_{SS} |\alpha=4)$ is shown. 
With increasing $\lambda$, the outage probability also increases due to the higher interference level.
Nevertheless, we can still achieve our desired outage probability, even in denser networks. 
We can confirm that as the unlicensed users transmit power increases, we can achieve better reliability in denser networks.

It was mentioned earlier that with finite blocklength, we cannot achieve the ultra-reliable (UR) region with the dynamic spectrum access model.
Consequently, this approach shall be used in applications with looser requirements ($98\%-99\%$).
In Fig.\ref{fig:nsub}-a we investigate the effect of the finite blocklength ($n$) on the outage probability for different information bit sizes ($k$).
where we can see that, for $100 \leq n \leq 1000$,  $\epsilon \leq 10^{-3} $ cannot be achieved.
For reaching the UR region, the blocklength would have to be increased to very large numbers.
\begin{figure*}[tb]
	\centering
	\includegraphics[width=2\columnwidth]{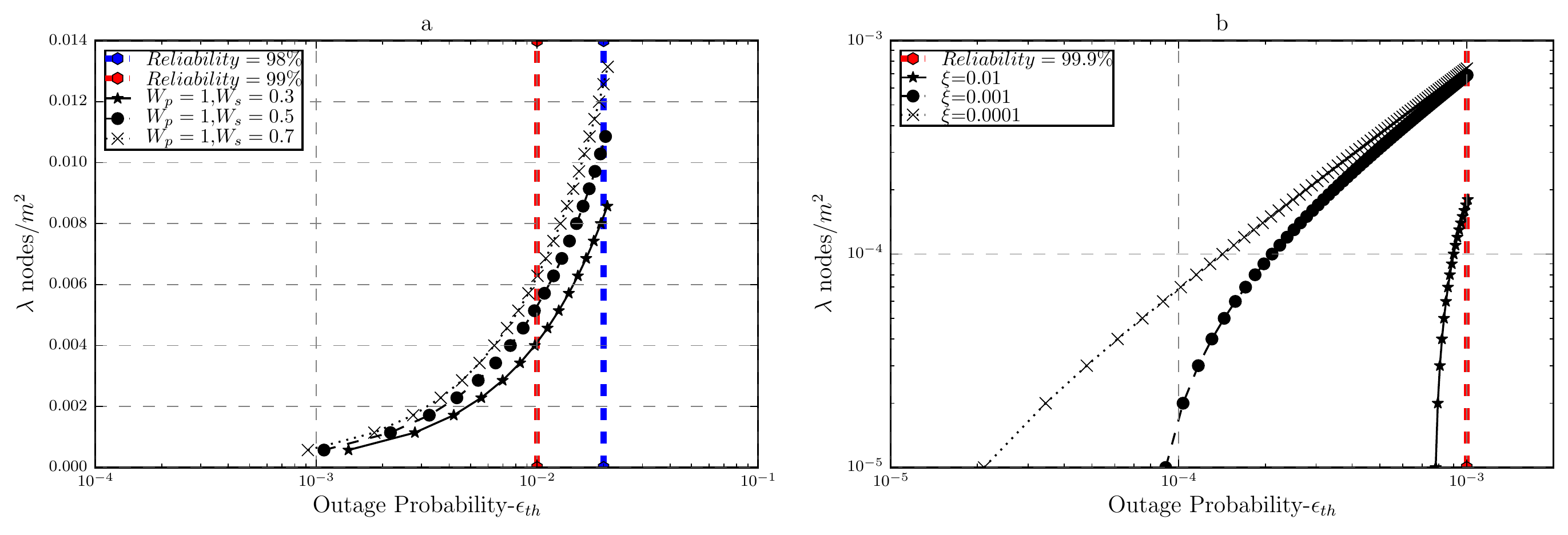}
	\caption{Network density $\lambda$ as a function of the outage probability $\epsilon$ in (a) dynamic spectrum access scenario where $d=1$, $\eta=1$, $n=200$,  and $R=0.1$ and (b) locally licensed scenario, considering $d=1$, $n=200$, $R=0.1$ and $\frac{W_p}{W_s}=1$. }
	\label{fig:t_w}
	\vspace{2ex}
\end{figure*}
%
%
\subsubsection{Locally Licensed Scenario}

$\mu\mathrm{O}$  can achieve higher levels of reliability, meeting the requirements of more critical applications like fault detection. 
For more examples about the reliability and data size requirements for different smart grid applications, refer to \cite[Table.3]{kuzlu2014communication}. 
The $\mu\mathrm{O}$ approach is subjected to a lower interferers' density when compared to the dynamic spectrum access, as shown in Fig.\ref{fig:t_w}-b. 
When the noise level is low, the outage probability is also low. 
It is illustrated in this figure that for $\lambda < 10^{-3}$, this model can operate in the UR region where reliability level is at least $99.9\%$.

%
%
%
\begin{figure*}[tb]
	\centering
	\includegraphics[width=2\columnwidth]{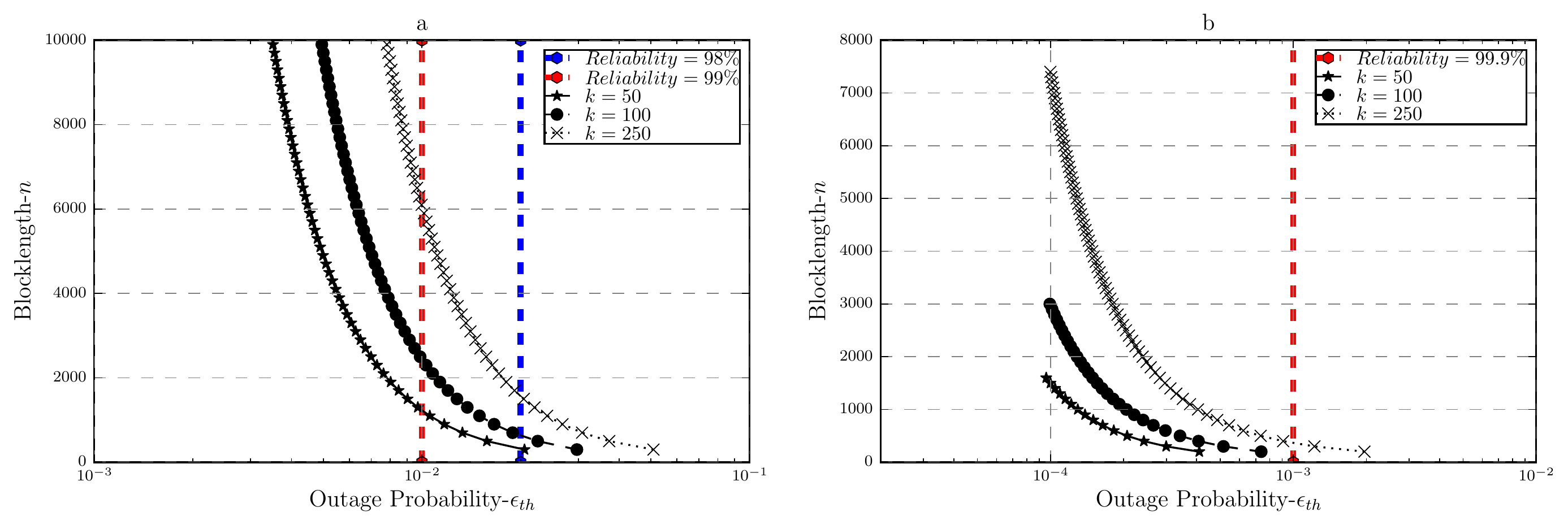}
	\caption{Blocklength $n$ as a function of the outage probability $\epsilon$ in (a) dynamic spectrum access scenario, considering $d=1$, $\eta=1$, $n=200$, $R=0.1$, $\lambda=10^{-2}$ and $\frac{W_p}{W_s}=1.4$ and (b) locally licensed scenario, considering $d=1$, $n=200$, $R=0.1$, $\xi=0.001$ Watts, $\lambda=10^{-4}$ and $\frac{W_p}{W_s}=1$}
	\label{fig:nsub}
	\vspace{2ex}
\end{figure*}

\begin{figure*}[!t]
	\centering
	\includegraphics[width=2\columnwidth]{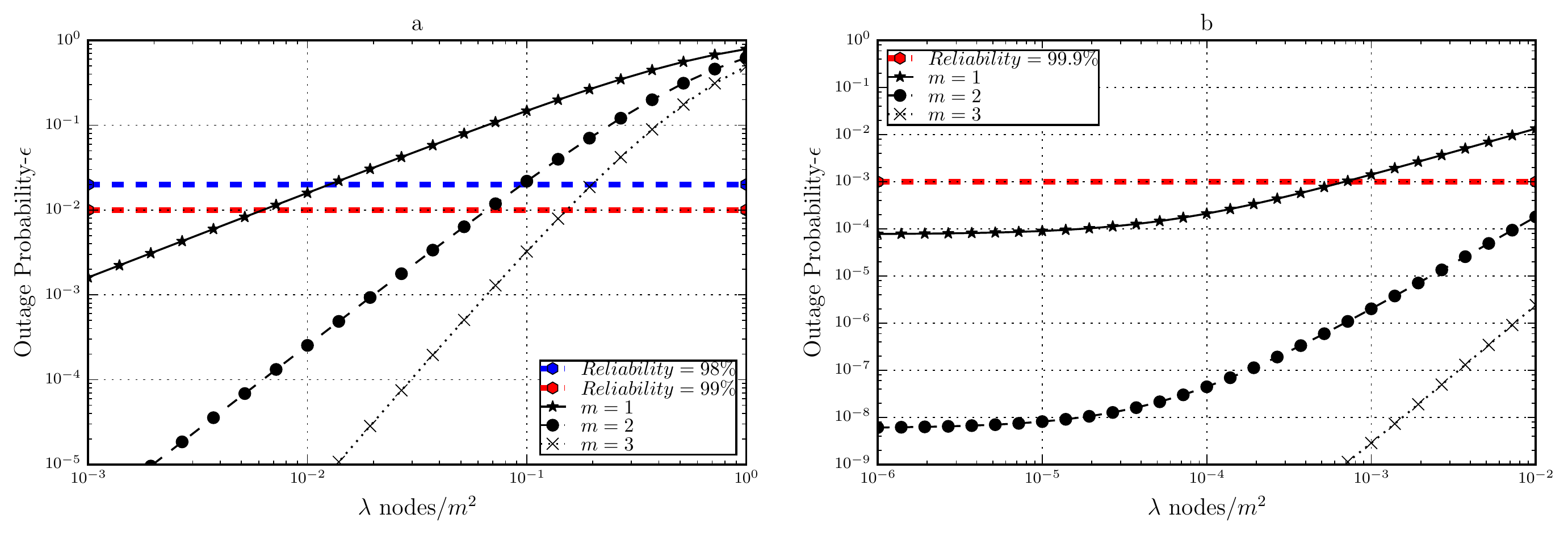}
	\caption{Effect of increasing the number of retransmissions on (a) dynamic spectrum access scenario and (b) locally licensed scenario outage probability $\epsilon$ as a function of the network density $\lambda$, considering $d=1$, $\frac{W_p}{W_s}=1.4$ and $\xi=0.001$.}
	\label{fig:ret}
\end{figure*}

%
Fig.\ref{fig:nsub}-b shows that ultra-reliability can be achieved with short blocklength in the $\mu\mathrm{O}$ scenario.
By increasing $k$, the required blocklength for keeping the link in the UR region also increases; in any case, $\epsilon \leq 10^{-3}$ is still achievable for relatively high $k$ when  $100 \leq n \leq 1000$.

\vspace{-2ex}
\subsection{Retransmission Attempts}

Two basic strategies are normally employed to cope with transmission errors in communication systems, namely automatic repeat request (ARQ) and forward error correction (FEC)\cite{lin1982hybrid}.
In this paper, we only consider ARQ.

ARQ consists of an acknowledgment (ACK) or negative acknowledgment (NACK) messages to be sent by the receiver to inform whether the intended message has been successfully decoded.
If the transmission was not successful, a retransmission is requested and the retransmission continues until the codeword is decoded successfully or the allowed maximum number of retransmission is reached \cite{Makki2014a}. 
This strategy has some drawbacks such as loss of throughput, which are studied in \cite{arq1,arq2}  (without considering the high reliability or low latency, though).
Following \cite{dosti2017ultra}, we study the effect of the number of transmission attempts for a given message on the outage probability $\bar{\epsilon}$, including at most $m$ transmission attempts assuming Type-I HARQ is \cite{el2004performance}:
\begin{equation}\label{eq:outageh}
\bar{\epsilon} = \epsilon (n,\lambda) ^m,
%
\end{equation}
\noindent where $\epsilon$ is the outage probability in \eqref{eq:outage}.

The effect of increasing the number of transmission attempts on both scenarios is shown in Fig. \ref{fig:ret}. It can be seen that as $m$ increases, the reliability is enhanced. 
Comparing the outage probabilities of when only one or up to two transmission attempts are allowed, we can see that for the same $\lambda$, a much lower outage probability can be achieved. 

Considering the outage curve for dynamic spectrum access and $\mu\mathrm{O}$, at some point at very low interferers' densities, the two curves cross each other. This is due to the fact that the network density becomes very low from that point onward.
Therefore, the interference power becomes lower and the dynamic spectrum access model starts to have a lower outage than the $\mu\mathrm{O}$ scenario.
However, it is important to remember that the dynamic spectrum access model is designed to be used in denser networks; so the fact that its outage probability becomes less than the $\mu\mathrm{O}$ scenario for networks with very low densities shall be neglected since it contradicts the basic assumption of a interference-limited network.

While retransmissions increase the reliability of the network, it also increases the latency, which is in fact another important aspect of MTC.
As 5G and MTC have strict requirements in terms of latency, we should also consider this metric in our analysis by limiting the maximum number of transmission attempts.. 
In this case, the total delay is calculated as
\begin{equation}\label{eq:late}
\delta=n+\nu+\sum_{j=1}^{m} (n_j+\nu_j),
\end{equation}
where $\nu$ denotes the number of channel uses for ACK/NACK messages that have been sent and $n$ is the number of channel uses.
The average delay expression is then
\begin{equation}\label{eq:lateav}
\overline{\delta}=\frac{1}{m}\sum_{j=1}^{m}\epsilon (n,\lambda).
\end{equation}

The number of channel uses and symbol time are the determining factors when dealing with delay. 
The symbol time ($T_s$) of LTE (long term evolution) is $T_s\approx 66.7 \mu s$ \cite{seidel2007overview} and the current latency requirements of different smart grid applications are described in \cite[Table.3]{kuzlu2014communication}. However, 5G is going to benefit from ultra low latency compared to LTE. Hence, the smart grids are also going to have ultra low latency which is expected to be $3 ms$ to $5 ms$ \cite{ts}. Considering that the symbol time is going to be $T_s= \frac{1}{120k}\approx 8.3 \mu s$, for $n=200$ and $m=1$, $\delta=1.66 ms + \nu$. As the number of allowed transmission attempts increases, the delay also increases. If $m=2$, then $\delta=3.32 ms + \nu$. We can see that increasing $m$ results in increasing the delay and this increase will not be linear since at each transmission there is also a feedback message sent every time. Thus, it is very important to limit the number of transmission attempts to avoid increasing the latency of the system.

%
%

%
%
%
%

\section{Discussion and final remarks}\label{sect:conclusion}

This paper evaluates the possibility of meeting the reliability requirements of different smart grid applications by using FB in two different system models, (\textit{i}) dynamic spectrum access scenario suitable for applications with loose reliability requirements ($98\%-99\%$), and (\textit{ii}) $\mu\mathrm{O}$ scenario suitable for applications with strict reliability requirements (more than $99\%$).

Our results show that it is possible to meet the expected reliability levels and even reach the UR region for smart grids while having FB. It is shown that several factors such as network density, coding rate and interference and noise level affect the outage probability of the system, hence, they should be taken into consideration when choosing a suitable model considering the required reliability level of a specific application. Studying the ultra reliability and delay opens up a wide range of research opportunities in the smart grid communication systems.

\appendices
\section{Proof of Proposition~\ref{prop:gen_outage} } \label{app:gen_outage}
As pointed out in \cite{Makki2014a}, the function $\operatorname{Q}(g(t))$ can be tightly approximated by a linear function for the whole S(I)NR range. Notice that the argument inside the $\operatorname{Q}$-function in \eqref{eq:outage} is given as $g(t) = \sqrt{n}  \left(1-(1+ t)^{-2}\right)^{-\frac{1}{2}} \log\left(1+ t\right)$, and that $g(t)$ is an increasing function of $t$, but not strictly positive $\forall t \in \mathbb{R}$, which restricts the use of other well known approximations for the $\operatorname{Q}$-function \cite{Nguyen2015}.
Then, let $\operatorname{Q}(g(t)) \approx W(t)$ be denoted as
\begin{align}
\label{eq:z}
W(t) = 
\begin{cases} 
1 & t \leq \varrho  \\
\frac{1}{2} - \frac{\beta}{\sqrt{2 \pi}} (t-\theta)   & \varrho < t < \vartheta \\
0 & t \geq \vartheta 
\end{cases}
\end{align}
where $\theta = 2^R-1$ is the solution of $g(t)=0$, while $\beta=\sqrt{\frac{n}{2\pi}} (2^{2 R}-1)^{-\frac{1}{2}}$ is the solution for $\frac{\partial \operatorname{Q}(g(t))}{\partial t} |_{t=\theta}$. 

Then, the outage probability becomes, 
\begin{align}
\epsilon &= \mathbb{E}_{Z}\left[\epsilon\right] = \int_{0}^{\infty} W(z) f_Z(z) \mathtt{d}z\nonumber
	\nonumber \\
&= \int_{0}^{\varrho}(f_Z(\varrho))\mathtt{d}z  + \int_{\varrho}^{\vartheta} \left( \frac{1}{2} - \frac{\beta}{\sqrt{2 \pi}} (z-\theta) \right) f_Z(z)
\mathtt{d}z	\nonumber
	\nonumber \\
&=F_Z(Z)\biggl |_0^\varrho +\int_{\varrho}^{\vartheta} \left( \frac{1}{2} - \frac{\beta}{\sqrt{2 \pi}} (z-\theta) \right) f_Z(z)\mathtt{d}z\nonumber\\
&=F_Z(\varrho)+\frac{1}{2}F_Z(z)\biggl |_\vartheta^\varrho + \frac{\beta\theta}{\sqrt{2 \pi}}F_Z(z)\biggl |_\vartheta^\varrho - \int\limits_{\varrho}^{\vartheta} \frac{\beta z}{\sqrt{2 \pi}} f_Z(z)\mathtt{d}z,
\end{align}
which after few algebraic manipulations is written as in \eqref{eq:general_outage}.

\section{Proof of Proposition~\ref{prop:outage4} } \label{app:gen_outagess}
By setting $\xi=0$ and replacing it into \eqref{eq:cdfSINR} and \eqref{eq:pdfSINR}, while considering  $F_Z (z) = F_Z( z | \alpha=4,\xi=0)$ and $f_Z (z) = f_Z( z | \alpha=4,\xi=0)$, the integral in \eqref{eq:general_outage} assumes the form of the upper incomplete gamma function \cite[§. 6.5.3]{BOOK:ABRAMOWITZ-DOVER03}.

\begin{align}
\label{eq:integral_Ass}
I&= \frac{\beta}{\sqrt{2 \pi}} \int_{\varrho}^{\vartheta} z f_Z(z)\mathtt{d}z\nonumber \\
&=\frac{\beta}{\sqrt{2 \pi}} \int_{\varrho}^{\vartheta} z( \frac{2 \lambda \zeta}{\alpha} z^{\frac{2}{\alpha}-1})\! \exp( \!-\! \zeta \lambda z^{\frac{2}{\alpha}}   )\mathtt{d}z\nonumber\\
& =\frac{2\beta\lambda\zeta}{\sqrt{2 \pi}\alpha}  \int_{\varrho}^{\vartheta} z (z^{\frac{2}{\alpha}-1})\exp( \!-\! \zeta \lambda z^{\frac{2}{\alpha}}   )\mathtt{d}z\nonumber\\                                      
&=\frac{2\beta\lambda\zeta}{\sqrt{2 \pi}\alpha}  \int_{\varrho}^{\vartheta} z^{\frac{2}{\alpha}}\exp( \!-\! \zeta \lambda z^{\frac{2}{\alpha}}   )\mathtt{d}z,
\end{align}

\noindent where the integral can be solved using integration by parts \cite[§2.02-5]{BOOK:GR2007}. Hence, we attain \eqref{eq:outage_SIR} by replacing \eqref{eq:integral_Ass} into \eqref{eq:general_outage} after few algebraic manipulations and using \cite[§2.321-1]{BOOK:GR2007} and \cite[§2.33-10]{BOOK:GR2007}.

\section{Proof of Proposition~\ref{prop:outage4} } \label{app:gen_outage4}

Considering  $F_Z (z) = F_Z( z | \alpha=4,\xi)$ and $f_Z (z) = f_Z( z | \alpha=4,\xi)$, the integral in \eqref{eq:general_outage} has a closed-form solution when $\alpha=4$ as follows
\begin{align}
\label{eq:integral_A4}
I&= \frac{\beta}{\sqrt{2 \pi}} \int_{\varrho}^{\vartheta} z f_Z(z)\mathtt{d}z\nonumber \\
&=\frac{[1-F_Z(\varrho)][1-F_Z(\vartheta)]}{2\sqrt{2\pi}\xi^\frac{3}{2}}\nonumber\\
&\times\bigg[2\beta\sqrt{\xi}((1-F_Z(\vartheta))^{-1}(1+\xi\varrho)\nonumber \\
&-((1-F_Z(\varrho))^{-1}(1+\xi\vartheta)) )\nonumber\\                                       
&+\exp(\zeta\lambda(\sqrt{\varrho}+\sqrt{\vartheta})+\xi(\varrho+\vartheta)+\frac{\zeta^2\lambda^2}{4\xi}   ) \sqrt{\pi}\beta\zeta       \nonumber \\
&\quad\left[\operatorname{erf}\left( \frac{\zeta \lambda  }{2\sqrt{\xi}} +\sqrt{\varrho\xi}\right) - \operatorname{erf}\left( \frac{\zeta \lambda  }{2\sqrt{\xi}}+{\sqrt{\varrho\xi}} \right)   \right] \bigg],
\end{align}
where the integral can be solved though integration by parts and with the help of \cite[§2.33-10]{BOOK:GR2007} and \cite[§2.33-16]{BOOK:GR2007}. Then, substituting $\alpha=4$ and \eqref{eq:integral_A4} into \eqref{eq:general_outage}, yields \eqref{eq:outage_4mo}.

\bibliographystyle{IEEEtran}
\bibliography{IEEEabrv,refsURC}
\end{document}